\renewcommand{\raggedright}{\leftskip=0pt \rightskip=0pt plus 0cm}
\newtheorem{property}{Property}
\newtheorem{definition}{Definition}
\newtheorem{example}{Example}
\newtheorem{lemma}{\textsc{Lemma}{\bfseries}{\itshape}}
\newcommand{\btitle}[1]{\vspace{1.5ex}\noindent\textbf{#1}}
\newtheorem{theorem}{Theorem}
\newtheorem{proof}{Proof}[section]
\newcommand{\ie}{\emph{i.e.,}\xspace}
\newcommand{\eg}{\emph{e.g.,}\xspace}
\newcommand{\resp}{\emph{resp.,}\xspace}
\newcommand{\eat}[1]{}
\DeclareSymbolFont{ugmL}{OMX}{mdugm}{m}{n}
\DeclareMathAccent{\wideparen}{\mathord}{ugmL}{"F3}
\def\EndOfProof{\nolinebreak\ \hfill\rule{2.0mm}{2.0mm}}
\begin{document}
\title{ONCE and ONCE+: Counting the Frequency of Time-constrained Serial Episodes in a Streaming Sequence}
\author{%
{Hui Li{\small $~^{\#1}$}, Sizhe Peng{\small $~^{\#2}$}, Jian Li{\small $~^{*3}$}, Jingjing Li{\small $~^{\dag4}$}, Jiangtao Cui{\small $~^{*5}$}, Jianfeng Ma{\small $~^{\#6}$} }%
\vspace{1.6mm}\\
\fontsize{10}{10}\selectfont\rmfamily\itshape
$^{\#}$\, School of Cyber Engineering, Xidian University, China\\
\fontsize{9}{9}\selectfont\ttfamily\upshape
%
$^{1}$\,hli@xidian.edu.cn\\
$^{2}$\,735451908@qq.com\\
$^{6}$\,jfma@mail.xidian.edu.cn%
\vspace{1.2mm}\\
\fontsize{10}{10}\selectfont\rmfamily\itshape
$^{*}$\,School of Computer Science and Technology, Xidian University, China\\
\fontsize{9}{9}\selectfont\ttfamily\upshape
$^{3}$\,lijian021@gmail.com\\
$^{5}$\,cuijt@xidian.edu.cn
\vspace{1.2mm}\\
\fontsize{10}{10}\selectfont\rmfamily\itshape
$^{\dag}$\,Department of Computer Science and Engineering, The Chinese University of Hong Kong, China\\
\fontsize{9}{9}\selectfont\ttfamily\upshape
$^{4}$\,lijj@cse.cuhk.edu.hk
}

\IEEEtitleabstractindextext{%
\begin{abstract}
As a representative sequential pattern mining problem, counting the frequency of serial episodes from a streaming sequence has drawn continuous attention in academia due to its wide application in practice, e.g., telecommunication alarms, stock market, transaction logs, bioinformatics, etc. Although a number of serial episodes mining algorithms have been developed recently, most of them are neither stream-oriented, as they require multi-pass of dataset, nor time-aware, as they fail to take into account the time constraint of serial episodes. In this paper, we propose two novel one-pass algorithms, ONCE and ONCE+, each of which can respectively compute two popular frequencies of given episodes satisfying predefined time-constraint as signals in a stream arrives one-after-another. ONCE is only used for non-overlapped frequency where the occurrences of a serial episode in sequence are not intersected. ONCE+ is designed for the distinct frequency where the occurrences of a serial episode do not share any event. Theoretical study proves that our algorithm can correctly mine the frequency of target time constraint serial episodes in a given stream. Experimental study over both real-world and synthetic datasets demonstrates that the proposed algorithm can work, with little time and space, in signal-intensive streams where millions of signals arrive within a single second. Moreover, the algorithm has been applied in a real stream processing system, where the efficacy and efficiency of this work is tested in practical applications.
\end{abstract}

\begin{IEEEkeywords}
event sequences, frequent episodes, sequence analysis
\end{IEEEkeywords}}

\maketitle

\IEEEdisplaynontitleabstractindextext

%
\IEEEpeerreviewmaketitle
\section{Introduction}
\label{intro}

With the development of cloud computing, internet of things, biocomputing and so on, numerous ordered sequences are accessible from various daily applications. Among all these applications, mining serial episodes from long sequences has various potential applications and thereby drew much research attention, especially in the fields of telecommunication~\cite{DBLP:journals/datamine/MannilaTV97}, finance~\cite{DBLP:conf/eisic/GolmohammadiZ12}, neuroscience~\cite{DBLP:journals/dke/AcharAS13} and information security. A \emph{serial episode} is referred to as an ordered collection of specific signals, \eg sequential alarm pattern in telecommunication alarm sequence, and the times it appears in the sequence is referred to as its frequency. Generally, studying the frequency of serial episode patterns can be used to analyze or summarize the whole sequence and can also be used to predict future signals in the sequence.

For instance, a long telecommunication alarm sequence can be summarized using a limited number of representative serial episodes; on the other hand, we may be interested in the frequency of some specific alarm episode patterns within the long sequence so that responses towards these alarm episodes can be optimized; besides, it can be directly used to mine frequent serial episodes; we may also be interested in predicting the future alarms within the sequence. All of these examples require counting the frequencies of a given set of serial episodes. \eat{However, in the process of counting the frequency of sequential patterns, techniques utilized in current works shows inability when applied to dynamic data streams }

Counting the frequency for a finite set of given serial episodes can be easily found in many real applications in different fields. For instance, in securities market, the detection of securities fraud is a challenging task considering the massive amount of trading data produced everyday. Insider trading, one category of deceptive practices, can be generalized as a serial pattern using a group of actions including offers and sales of securities~\cite{DBLP:conf/eisic/GolmohammadiZ12}. With a set of patterns/trends that is known to be fraud, automatic detection of fraudulent activities can be achieved as long as we focus on the deceptive patterns in the streaming trading sequence. Besides, in the field of bioinformatics, in order to analyze a gene set of interest, analyzing its frequency and distribution among the whole genome datasets can help find out in which tissues or cells are they co-expressed~\cite{sahaweb}.

In a message-intensive system, millions of data are generated within several minutes. Such data are referred to as streams~\cite{DBLP:journals/corr/abs-1205-4477}. Formally, a \emph{streaming sequence} is composed of several types of events and it will dynamically update its length as new events occur and often in a high rate. Conventional methods for counting the frequency of serial episodes are generally based on the idea of storing the entire dataset and then processing it through multiple passes. Hence, traditional algorithms are not applicable on streams as it is impossible to store the entire unlimited data before the processing. Any method for data streams must thus operate under the constraints of limited memory and time which means that data streams must be processed faster than they are generated. To this end, in this paper, we propose an efficient one-pass solution to count the frequencies of a given set of serial episodes in a data stream without the need to load the whole sequence beforehand.

In addition, although there exist some efforts that mine the frequency of serial episodes from a long sequence, among which~\cite{DBLP:conf/kdd/LaxmanSU07,DBLP:conf/sigmod/WuDR06} even work in streams in a one-pass manner, they suffer from a key limitation that the serial episodes mined are not associated with any time constraint. That is, they do not care whether the serial episodes fall into a limited time span (\eg an hour, a day, etc.). For instance, a common scenario in telecommunication alarm sequence study is to learn the typical serial alarm episodes in order to discover the sequential association rules between alarms so that we do not need to respectively respond to each of them, because responding to the earliest alarm can always automatically address the following ones incurred by it. Obviously, alarms that form a sequential association rule should not exhibit too large time span (\eg an hour, etc.). As another example, we may be interested to know a particular person's \emph{daily} mobility pattern (e.g., \emph{Office}$\rightarrow$\emph{Gym}$\rightarrow$\emph{Bar}) to help quantify his daily movement condition. In both of these scenarios, we have to limit the time span of the serial episodes.

As the state-of-the-art single-pass serial episodes mining algorithms,~\cite{DBLP:conf/kdd/LaxmanSU07,DBLP:conf/sigmod/WuDR06,DBLP:conf/edbt/CadonnaGB11} employ automata to count the occurrences for each target episode. Unfortunately, the automata they employed cannot be easily incorporated with time constraint. This is thoughtfully discussed in Section~\ref{sec:probstat}. To address the problem, we propose a new model that successfully avoids the problem of~\cite{DBLP:conf/kdd/LaxmanSU07,DBLP:conf/sigmod/WuDR06,DBLP:conf/edbt/CadonnaGB11} in counting serial episodes satisfying given time span within \emph{streaming sequence}. In summary, our contributions in this work are as follows.
\begin{itemize}
\item We formally define the non-overlapped frequency counting problem of time-constrained episodes. To address the problem, we present a carefully designed data structure, namely OccMap, as well as a group of operations over it. An OccMap corresponds to a particular serial episode and stores the timestamps of valid signals
  \item Based on OccMap,we propose two efficient algorithm ONCE and ONCE+ (\underline{O}ccurre\underline{N}ce \underline{C}ount of serial \underline{E}pisode) to compute two popular frequencies of given time-constrained serial episodes in a dynamic event stream, over which only one-pass process is required. In particular, ONCE computes non-overlapped frequency while ONCE+ works on distinct frequency.
  \item ONCE (ONCE+) does not require any other user-specified parameter except the time constraint $\tau$. Our algorithm does not put any restriction over the streaming sequence, which can either arrive in batches (a group of sequentially ordered signals)~\cite{DBLP:conf/icdm/PatnaikLCR12} or single signals.
  \item We theoretically prove that ONCE and ONCE+ algorithms can correctly count the target frequencies, respectively. Besides, processing an event in the stream only requires $O(k\log\tau)$ time, where $k$ is the length of the episode.
   \item Empirical studies conducted over both real-world and synthetic datasets justify that ONCE and ONCE+ can efficiently and correctly find the frequencies of the serial episodes and outperforms baseline method in the aspects of both space and time cost.
\end{itemize}

The rest of this paper is organized as follows. In next section, we briefly discuss related work in serial episode mining. In Section~\ref{sec:probstat} we introduce the preliminary definitions and problem statement. Afterwards, we present the details of our solution towards the problem in Section~\ref{sec:meth} and Section~\ref{sec:once+} with theoretical study of the complexity and correctness. In Section~\ref{sec:exp}, we conduct empirical study over real-world and synthetic datasets. We show a practical application where the proposed algorithm is applied and discuss the corresponding observations in Section~\ref{sec:obs}. Lastly, we conclude our work in the Section~\ref{sec:con}.
\vspace{0ex}\section{Related Work}\label{sec:rel}
Several types of sequential patterns have been extensively studied so far, including frequent (closed) sequential pattern mining~\cite{DBLP:conf/icde/AgrawalS95,DBLP:conf/icde/PeiHPCDH01,DBLP:conf/sdm/YanHA03,DBLP:journals/csur/MooneyR13,DBLP:conf/kdd/BertensVS16}, serial episodes discovery~\cite{DBLP:journals/kais/JiBD07,DBLP:conf/kdd/LaxmanSU07}, periodic (ordered) pattern mining~\cite{DBLP:journals/tkdd/ZhangKCY07,DBLP:journals/tkde/PeiWLWWY06}. Within these works, various frequency definitions of episodes have been proposed, which have given rise to different types of frequent episodes. Recently, Achar et al.~\cite{DBLP:journals/kais/AcharLS12} reviewed 7 different frequency definitions in the literature. Three of them, window-based frequency~\cite{DBLP:journals/datamine/MannilaTV97}, head frequency~\cite{chen9}, and total frequency~\cite{chen9}, consider the number of windows containing at least one occurrence of an episode, where each window has the same specified width. The remaining definitions, minimal occurrence-based frequency~\cite{DBLP:journals/datamine/MannilaTV97}, non-overlapped frequency~\cite{DBLP:journals/tkde/LaxmanSU05}, non-interleaved frequency~\cite{chen12} and distinct frequency~\cite{chen10}, directly take into account the different occurrences of an episode in the sequence.

However, these efforts cannot be deployed to some real-world applications such as fraud trading detection or telecommunication alarm responses as they ignored the practical significance of time constraint in episodes. Notably, the author in~\cite{DBLP:conf/kdd/LaxmanSU07} suggested that, by attaching to each automaton a time constraint, their method can address time-constrained serial episode mining problem. However, they actually failed to empirically test this suggested method in time-constrained problem. Unfortunately, as we will illustrate in detail in Section~\ref{ssec:prodef}, this suggested method is unable to generate correct answer.

In the field of serial episode mining over sequential streams, related algorithm studies have become increasingly prevalent over the recent years~\cite{DBLP:conf/icdm/PatnaikLCR12,DBLP:conf/vldb/2002,DBLP:conf/icdm/2007,DBLP:journals/kais/ChengKN08}. Patnaik et al.~\cite{DBLP:conf/icdm/PatnaikLCR12} considered serial episode mining over dynamic data streams. The main contribution of their work is to define the batch of events and apply their algorithms over each batch. But the performance of their method highly depends on the size of batches where the frequency is computed. A large batch leads to high response time, while a small one fails to count the frequency of long episodes. Especially, once each batch of data contains only one event, \ie events arrive one after another, their algorithm cannot work anymore. In addition, when a serial episode stretches over two consecutive batches, this occurrence of the episode will be missed. Xiang et al.~\cite{DBLP:conf/icde/AoLLZH15} presented MESELO algorithm, which requires a complete view of the whole sequence. It strictly limits their application in streams where the number of events is potentially unlimited. For instance, if we want to learn the frequency of an episode in the past 48 hours, the window size $\Delta$ in their method should be set as a large time span to store all records in the past hours, which takes enormous memory consumption. They also presented another work~\cite{DBLP:conf/icde/AoLWZH17} that aims to mine serial episodes over precise-positioning sequences, where the elapsed time between any two consecutive events is a constant. SASE~\cite{DBLP:conf/sigmod/WuDR06} has been proposed to record the appearance of target serial episode within a stream. The proposed structure has to spend $O(2k\tau)$ time to process a single signal in the stream, while our algorithm takes only $O(k\log\tau)$. Besides, none of these works takes into account the time span for the target serial episode. In contrast, we present in this paper a novel one-pass algorithm that works on stream sequence without any requirement to store the whole sequence beforehand or any limitation on the batch size, while taking into account the time span of the episodes.

\vspace{0ex}\section{Problem Formulation}\label{sec:probstat}
In this section, we shall first present serials of preliminary definitions. Besides, for ease of understanding, in Table~\ref{tab:notation} we summarize the key notations that will be used in this paper.
\begin{table}[t]
\caption{Notations}
\vspace{0ex}
\begin{center}
\begin{tabular}{r|l}
\hline
\textbf{Symbols} & \textbf{Descriptions}\\
\hline
$(s_i,t_i)$ & temporal event $s_i$ happens at $t_i$\\
$S=\langle(s_1,t_1),\ldots\rangle$ & streaming sequence\\
$e=\langle \phi_1,\ldots,\phi_k\rangle$ & serial episode\\
$Occ(e,S)$ & occurrence of $e$ in $S$\\
$Occ_{OPT}(e,S)$ & minimal occurrence of $e$ in $S$\\
$e^\tau$ & time-constrained serial episode\\
$\tau$ & time constraint of $e^\tau$\\
$k$ & length of serial episode \\
$OM(e^\tau)=[L_1,\ldots,L_k]$ & OccMap of $e^\tau$\\
$L_i$ & timestamp list of $i$-th layer in $OM(e^\tau)$\\
\hline
\end{tabular}
\end{center}
\label{tab:notation}
\end{table}

\vspace{0ex}\subsection{Preliminaries}\label{ssec:31}
We first define streaming sequences, serial episodes~\cite{DBLP:conf/kdd/WuLYT13} and non-overlapped frequency.

\begin{definition}[Streaming sequence]\label{df:streamseq} \emph{Streaming sequence} is a long (potentially infinite) sequence of event\footnote{\scriptsize To avoid duplicate word usage, we shall use the words event and signal interchangeably in the rest of this paper.}. Let $\Sigma$ be finite alphabet set, $S$ be a sequential list of events, denoted by $S=\langle (s_1,t_1), \ldots, (s_n,t_n),\ldots\rangle,$ where $s_i\in \Sigma$ and the pair $(s_i,t_i)$ ($1\le i\le n, t_{i-1}<t_i$) means event $s_i$ happens at timestamp $t_i$. We denote by $S(n)=\langle (s_1,t_1), \ldots, (s_n,t_n)\rangle$ as the first $n$ event subsequence of $S$. Let $S[i]$ be the $i$-th element of $S$ (i.e. $(s_i,t_i)$), $S[i].e$ and $S[i].t$ be the $i$-th event and corresponding timestamp of $S$, respectively.
\EndOfProof\end{definition}

For instance, a daily trajectory of a person can be denoted as
$S=\langle(\mathit{Home},8:00), (\mathit{Office},10:00), (\mathit{Gym},15:00), (\mathit{Bar},20:00)\rangle$ where $|S|=4$, $\Sigma=\mathit{Home, Office, Gym, Bar}$, $S[2]=(\mathit{Office},10:00)$, $S[2].e=\mathit{Office}$, $S[2].t=10:00$. In particular, if $s_i$ ($1\le i\le n$) is a set of events that happen simultaneously (\ie $s_i\subset \Sigma$), the sequence is referred to as \emph{complex streaming sequence}. Otherwise, if $s_i$ ($\forall i, 1\le i\le n$) is an individual event, it is a \emph{simple streaming sequence}.

\begin{definition}[Serial episode]\label{df:simpleEp}
A \emph{serial episode} is a set of totally ordered events, denoted by $e = \langle\phi_1,\ldots,\phi_k\rangle,$ where $\phi_i$ appears before $\phi_j$, if and only if $1\le i\le j\le k$. In particular, we denote by $|e|=k$ as the length of $e$.
\EndOfProof\end{definition}

For instance, in the above sequence example where $S=\langle(\mathit{Home},8:00), (\mathit{Office},10:00), (\mathit{Gym},15:00), (\mathit{Bar},20:00)\rangle$, $e_1=\langle \mathit{Home, Office}\rangle$ and $e_2=\langle \mathit{Home, Gym, Bar}\rangle$ are both serial episodes; the length of $e_1$ (\ie $|e_1|$) is 2 and that of $e_2$ (\ie $|e_2|$) is 3, respectively.

\begin{definition}[Occurrence]\label{df:occur}
Given a serial episode $e=\langle\phi_1,\ldots,\phi_k\rangle$, the timestamp $\langle t_1,\ldots,t_k\rangle$ is defined as the \emph{occurrence} of $e$ if $\phi_i$ happens at timestamp $t_i$. We denote by $Occ(e,S)$ as an occurrence of serial episode $e$ in $S$.
\EndOfProof\end{definition}
\vspace{0ex}
\begin{definition}[Minimal occurrence]\label{df:minocc}
Given a serial episode $e=\langle\phi_1,\ldots,\phi_k\rangle$, and its occurrence $Occ(e,S)$, namely $\langle t_1,\ldots,t_k\rangle$. If there is no other occurrence of $e$, say $\langle t_1^\prime,\ldots,t_k^\prime\rangle$, such that $t_1^\prime\ge t_1$ and $t_k^\prime\le t_k$, then $Occ(e,S)$ is called a minimal occurrence of $e$ in $S$, denoted as $Occ_{OPT}(e,S)$.
\EndOfProof\end{definition}

\begin{definition}[Time-constrained serial episode]\label{df:tepi}
A serial episode with time constraint $\tau$ is denoted as $e^\tau=\langle\phi_1, \phi_2, \ldots, \phi_k\rangle,$ where the occurrence of $e^\tau$ fall in a specified time period $\tau$ (\eg daily/weekly/monthly), that is, $|[t_1,t_k]|\le\tau$ (\ie $t_k-t_1\le\tau$). Usually, $e$ is used to represent a certain serial episode without time-constraint.
\EndOfProof\end{definition}
\vspace{0ex}
\begin{example}
Given the following sequences, $$S_1= \langle(A,1), (B,2), (A,3), (B,4), (C,6), (B,8)\rangle,$$ $$S_2= \langle(B,1), (B,2), (A,3), (B,4), (A,5), (C,8)\rangle,$$ $$S_3= \langle(B,1), (A,2), (B,4), (A,5), (C,7)\rangle,$$
serial episode $e=\langle B,A,B\rangle$ and time-constrained serial episode $e^2=\langle B,A,B\rangle$, we illustrate the occurrences of both $e$ and $e^2$ in all $S_1, S_2$ and $S_3$.

According to Definition~\ref{df:occur}, we can easily obtain
$Occ_1(e,S_1)=\langle 2,3,4\rangle$,
$Occ_2(e,S_1)=\langle 2,3,8\rangle$,
where $Occ_{OPT}(e,S_1)=Occ_1(e,S_1)$ is a minimal occurrence.

Similarly, according to Definition~\ref{df:occur}, it is easy to find that
$Occ_1(e,S_2)=\langle 1,3,4\rangle$,
$Occ_2(e,S_2)=\langle 2,3,4\rangle$,
$Occ_1(e,S_3)=\langle 1,2,4\rangle$.
Moreover, the occurrences of $e^2$ in all the above sequences are as follows,
$Occ(e^2,S_1)=\langle 2,3,4\rangle$,
$Occ(e^2,S_2)=\langle 2,3,4\rangle$,
$Occ(e^2,S_3)=\emptyset$.
\EndOfProof\end{example}

Note that the events constituting an occurrence of a serial episode are not required to be contiguous in the stream.

As reviewed in Section~\ref{sec:rel}, a number of different frequency definitions have been proposed to capture how often an episode occurs in an event sequence. We observe that existing frequency definitions can be grouped into two categories: definitions incurring dependent occurrences (\eg two occurrences of an episode may share common events) and definitions incurring independent occurrences. Due to space constraints, we focus this paper only on the type of frequency definitions incurring independent occurrences, which contains two frequency definitions: the non-overlapped frequency~\cite{DBLP:conf/kdd/LaxmanSU07,DBLP:journals/tkde/LaxmanSU05} and the distinct frequency~\cite{chen10}. We review the definitions of the two frequency measures as follows.

\begin{definition}[Non-overlapped frequency]\label{df:frequency}
In an event stream $S$, two occurrences of $e$ (\resp $e^\tau$), \ie $\langle t_1,\ldots,t_k\rangle$ and $\langle t_1^\prime,\ldots,t_k^\prime\rangle$, are non-overlapped if either $t_1^\prime>t_k$ or $t_1>t_k^\prime$. The \emph{non-overlapped frequency} of $e$ (\resp $e^\tau$) in $S$ is denoted as $freq(e,S)$ (\resp $freq(e^\tau,S)$).
 \EndOfProof\end{definition}

\begin{definition}[Distinct frequency]\label{df:dfrequency}
In an event stream $S$, two occurrences of $e=\langle \phi_1, \ldots, \phi_k\rangle$ (\resp $e^\tau$), \ie $\langle t_1,\ldots,t_k\rangle$ and $\langle t_1^\prime,\ldots,t_k^\prime\rangle$, are distinct if they do not share any event, that is $\forall 1\le i<j\le k,$ if $\phi_i=\phi_j$, $t_i\neq t_j^\prime$. The \emph{distinct frequency} of $e$ (\resp $e^\tau$) in $S$ is denoted as $freq^+(e,S)$ (\resp $freq^+(e^\tau,S)$).
 \EndOfProof\end{definition}
\vspace{0ex}
\begin{example}
Given the following sequences, $$S= \langle(A,1), (A,2), (A,3), (A,4), (B,5), (B,6)\rangle,$$
time-constrained serial episode $e^4=\langle A,A,B\rangle$, we can easily obtain
$Occ_1(e_4,S)=\langle 1,2,5 \rangle$,
$Occ_2(e_4,S)=\langle 1,3,5\rangle,\ldots,Occ_4(e_4,S)=\langle 2,3,5\rangle,\ldots,Occ_9(e_4,S)=\langle 3,4,6\rangle$,
are occurrences of $e^4$ in S, $\langle 3,4,5\rangle$ is a minimal occurrence of $e^4=\langle A,A,B\rangle$ in $S$, obviously, $\langle 3,4,6\rangle$ is another minimal occurrence. However, they overlap with each other. Thus, $freq(e^4,S)$ is $1$. On the other hand, $\langle 1,2,5 \rangle$ and $\langle 3,4,6 \rangle$ are distinct occurrences because they don't have the same timestamp $t_i$ and $1<3<5<6$. Thus, $freq^+(e^3,S)$ is 2.
\EndOfProof\end{example}




\vspace{0ex}\subsection{Problem definition}\label{ssec:prodef}

Given an event stream and the serial episode, whose frequency is to be extracted, we aim to identify the \emph{frequency of serial episodes with time constraint} from the long stream.

\begin{definition}[Time-constrained frequency counting problem]\label{def:pfemin}
Given event $S[i]$ in stream\footnote{\scriptsize It can be a simple stream sequence or a complex one, our model can work on both of them.} $S$ arrives one after another, a time-constrained serial episode $e^\tau$, \emph{time-constrained frequency counting problem} aims to evaluate $freq(e^\tau,S(i))$ whenever a new event $S[i]$ arrives.
\EndOfProof\end{definition}

The most related work with the aforementioned problem is serial episodes frequency mining in long sequences, among which the most representative is~\cite{DBLP:conf/kdd/LaxmanSU07}, an effective solution towards mining frequent serial episodes from an arbitrary long event sequence. This approach utilizes a group of automaton, each of which corresponds to a particular candidate serial episode. The approach works by sequentially scanning every event within the target sequence $S$. Each time an event is observed, the corresponding automaton who is waiting for this event (\ie the next state matches the event) is updated. Whenever an automaton comes to end state, its corresponding count increases by 1 and the automaton is reset to the start state.

\begin{figure}
\centering
  \epsfig{file=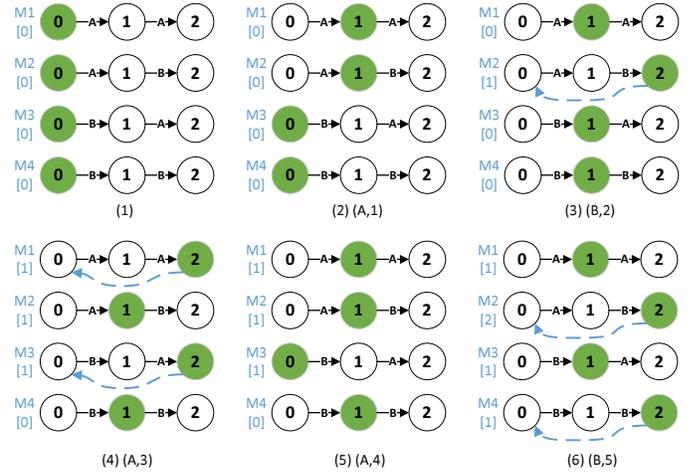, width=\columnwidth}

\vspace{0ex}\caption{State-of-the-art automaton based serial episode mining scheme [best viewed in color] (numbers in bracket are the corresponding counts).}
\vspace{0ex}\label{fig:sce}       
\end{figure}

For instance, Figure~\ref{fig:sce} shows an example where the target sequence $S=\langle(A,1), (B,2), (A,3), (A,4), (B,5)\rangle$. Suppose we are interested in the frequency of the following serial episodes, $e^1_{21}=\langle A,A\rangle, e^1_{22}=\langle A,B\rangle, e^1_{23}=\langle B,A\rangle, e^1_{24}=\langle B,B\rangle$. Given the four candidates, four finite state automata as $M1, M2, M3, M4$ are built (Figure~\ref{fig:sce}(1)) where $\Sigma=\{A,B\}$, the states of which sequentially correspond to the events in the episodes. Then it sequentially scans $S$. Event $A$ is observed first, both $M1$ and $M2$, whose next states are $A$, change to next state (Figure~\ref{fig:sce}(2)); the other two automata keep unchanged. Afterwards, $B$ is observed, this time $M2$, $M3$ and $M4$ all change to next state (Figure~\ref{fig:sce}(3)); as $M2$ reaches to the final state, hence its count increases and it is reset to the initial state again.

The aforementioned scheme~\cite{DBLP:conf/kdd/LaxmanSU07} is justified effective and efficient in mining the non-overlapped frequency of given serial episodes from a long sequence. However, it does not take into account the time constraint for each episode, hence it cannot be applied in our scenario described in Section~\ref{sec:probstat}. Can we adjust it with limited variation to address our problem? The answer is no. Although the authors in~\cite{DBLP:conf/kdd/LaxmanSU07,DBLP:journals/tkde/LaxmanSU05} suggested that simply attaching the time constraint to each automaton can solve the problem of mining episodes with given time constraint, they did not put it into practice, even when they mentioned the same method again several years later~\cite{DBLP:journals/kais/AcharLS12}. In fact, the suggested method may lead to inaccurate results in time-constrained case. The following example shows that simply adding a time constraint towards each automaton, as suggested by~\cite{DBLP:conf/kdd/LaxmanSU07,DBLP:journals/tkde/LaxmanSU05,DBLP:journals/kais/AcharLS12}, may inaccurately count the frequency of target serial episodes.

Suppose we are now adding a time constraint to each automaton in~\cite{DBLP:conf/kdd/LaxmanSU07,DBLP:conf/sigmod/WuDR06} by introducing another column, namely \emph{start\_time}, to store the timestamp of first valid state change. The count of an automaton increases when not only the state changes to the final one but also the time span between final state and \emph{start\_time} is within $\tau$.

In this way, only those instances satisfying the time constraint are counted. It seems to be a valid solution towards our problem. However, this solution may miss many valid occurrence counts, hence cannot satisfy Definition~\ref{def:pfemin}. For instance, if we follow the above adjusted solution, the count of $e^1_{21}$ will be $0$ (as $M1$ will be activated by $<(A,1),(A,3)>$ which finally fails to satisfy the time-constraint check). However, in fact there exists an instance of $e^1_{21}$, namely $\langle3,4\rangle$. Similarly, the above solution can find $1$ minimum occurrence of $e^1_{22}$ (\ie $\langle1,2\rangle$). However, there exist two minimum occurrences of $e^1_{22}$, namely $\langle1,2\rangle$ and $\langle4,5\rangle$. Such problems will be much more complex and difficult to address by automata especially when $e^\tau$ contains many repeated events, \eg $e^\tau=\langle A,A,A\rangle$.

Therefore, it is not a trivial task to design a model to count the non-overlapped frequency of serial episodes within a long streaming sequence that takes into account the time span of serial episode. To address this challenge, we develop a novel approach in next section.

\vspace{0ex}\section{ONCE Algorithm}\label{sec:meth}
In this section, we present in detail the algorithm for serial episodes counting in streaming sequence under an arbitrary time constraint. Given a streaming event sequence and a target serial episode with an arbitrary time constraint, ONCE algorithm generally works as follows. As each event in the stream passing by, we first need to find the latest minimum occurrences of the target serial episode, no matter it satisfies the given time constraint or not. To achieve that, we present a delicately constructed data structure, namely OccMap, which stores the timestamps of events that constitute the target serial episode. Whenever all events have been found in OccMap, we validate the candidate minimum occurrence by testing whether it satisfies the time constraint or not. If the test succeeds, we increase the count by 1. Afterwards, the tested occurrence and the unused timestamps of events are removed from OccMap. As a result, ONCE can output the frequency of target time-constrained serial episode whenever requested. Notably, in the following discussion, although we focus on counting the frequency of a given time-constrained serial episode, ONCE can in fact simultaneously count the frequencies for a group of target time-constrained serial episodes as the following proposed structure and corresponding operations are bind with each target time-constrained serial episode independently. Moreover, it is obvious that ONCE is a one-pass algorithm which is applicable to stream sequences.

\vspace{0ex}\subsection{OccMap structure}\label{ssec:occmap}
First of all, we present the data structure, namely OccMap, to store the timestamps of events in target serial episode. It is further used to extract candidate minimum occurrence of the serial episode.

An OccMap for time-constrained serial episode $e^\tau$, is defined as a group of hierarchical lists. In particular, given $e^\tau=\langle\phi_1,\ldots,\phi_k\rangle$, the OccMap for $e^\tau$ contains $k$ lists which are organized hierarchically into $k$ layers. The $k$ layers correspond to all the $k$ signals of $e^\tau$. Each layer is an individual list, which is used to record the timestamps of the corresponding signals in the stream. For instance, in Figure~\ref{fig:runexm}(15+) there is an OccMap that corresponds to $e^\tau=\langle A,A,B\rangle$. It consists of three lists that are hierarchically organized. Each list is correlated with a single signal in the serial episode. Notably, if the same signal appears many times in a serial episode, \ie $A$ in $e^\tau$, we assign a list to each of the appearance independently.

\begin{figure*}
\centering
  \epsfig{file=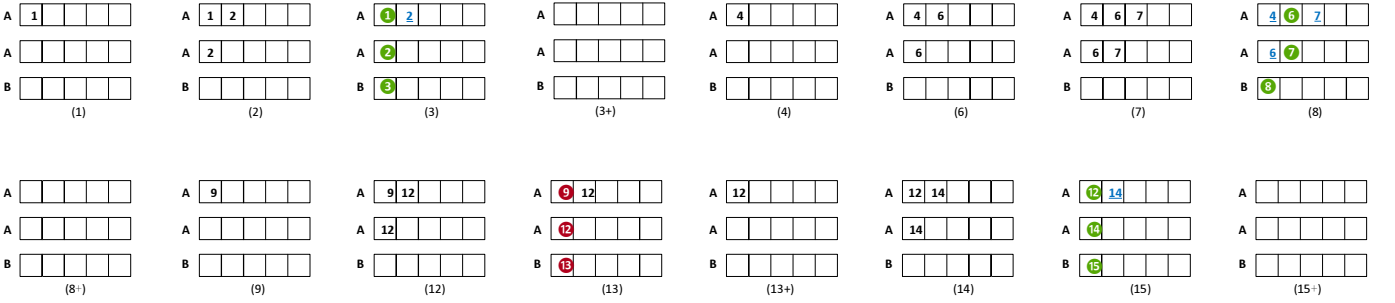, width=\linewidth}

\vspace{0ex}\caption{Mining the non-overlapped frequency of $e^3=\langle A,A,B\rangle$ in the sequence of Example~\ref{ex2} [best viewed in color]. (circled number in green: occurrences that satisfy the time constraint; circled number in red: occurrences that fail to satisfy time constraint; circled and underlined numbers: the timestamps to be removed)}
\vspace{0ex}\label{fig:runexm}       
\end{figure*}

\vspace{0ex}\subsection{Counting the frequency using OccMap}
To facilitate the following discussion, we denote by $OM(e^\tau)=[L_1,\ldots,L_k]$ as the OccMap for time-constrained serial episode $e^\tau=\langle\phi_1,\ldots,\phi_k\rangle$.

In particular, we denote by
$OM(e^\tau)[i] = L_i$, where
$L_i = \langle i_1,\ldots,i_j\rangle$ and $L_i.e = \phi_i$. $i_j$ is the timestamp of $\phi_i$ in stream $S$; $\phi_i$ refers to the signal corresponding to $L_i$.

When processing the stream, OccMap has to perform the following operations, \emph{list update}, \emph{occurrence validate} and \emph{invalid entries elimination}. In the following, we describe in detail how each of the operations are performed in OccMap. To illustrate each operation clearly, we shall use the following sequence $S$ as the running example.

\begin{example}\label{ex2} Consider the following event stream $S$:
\begin{equation*}
\begin{split}
\langle(A,1),(A,2),(B,3),(A,4),(C,5),(A,6),(A,7),(B,8),\\
(A,9),(C,10),(C,11),(A,12),(B,13),(A,14)(B,15)\rangle
\end{split}
\end{equation*}
\end{example}

\vspace{0ex}\btitle{List update.} Given an OccMap $OM(e^\tau)$ that corresponds to $e^\tau$, we perform \emph{list update} by scanning every signal in the streaming sequence $S$ as it passes by.

At the very beginning, $OM(e^\tau)$ is initialized as $k$ layered empty lists. Besides the layered lists, we denote by $OM(e^\tau).\ell$ as the most recent active layer number, which is initialized as $1$. For each signal $S[j]$ passing by, $OM(e^\tau)$ checks whether it matches any signals whose corresponding lists are active. Suppose $S[j].e=\phi_i$ and $i\le OM(e^\tau).\ell$, we append the timestamp of $S[j].e$ to the end of list $L_i$. During the update, only the layers $L_i$ (\ie $OM(e^\tau)[i]$) where $i\le OM(e^\tau).\ell$ can be updated. In another word, a new layer can be updated (\ie the corresponding list can be appended with a timestamp) only when the layers before it are not empty. It guarantees OccMap the following property, which is straightforward.
\begin{property}\label{property}[Minimum monotonicity]
If an OccMap is updated according to \emph{list update} strategy, it satisfies: $$\min L_1<\min L_2<\ldots <\min L_{OM(e^\tau).\ell-1}.$$
\end{property}

For example, given the target episode $e^3=\langle A,A,B\rangle$ and sequence $S$ shown in Example~\ref{ex2}, an OccMap $OM(e^3)$ is built, which contains three empty lists $L_1$ ($L_1.e=A$), $L_2$ ($L_2.e=A$), and $L_3$ ($L_3.e=B$). Therefore, whenever new event of $S$ arrives, only two types of events: $\langle A, t_i\rangle$ and $\langle B, t_j\rangle$ are accepted and stored in the lists. Notably, as we have mentioned above, if a signal appears many times within a target serial episode, we assign an independent list for each appearance, \ie $L_1$ and $L_2$. Suppose we are starting from the very beginning of $S$, each time a new event arrives and an old one leaves. At the very beginning, all the lists $L_1,L_2,L_3$ are initialized as empty. Besides, $OM(e^\tau).\ell$ is set to $1$. Now the first event of $S$ (\ie $(A,1)$) comes, OccMap finds that $A$ is a valid event that should be taken into account. Then it tests whether it matches any signal in the activated layers, which now contains only $L_1$. Obviously, it perfectly matches the first layer, \ie $S[1].e=L_1.e$, and $L_1$ is of cause activated, \ie $1\le OM(e^\tau).\ell$. As the test successes, we update $OM(e^3)$ by appending $S[1].t$ to the end of $L_1$, which results in Figure~\ref{fig:runexm}(1). Moreover, as the first layer is not empty then, the second layer now becomes activated, \ie $OM(e^3).\ell=2$.

When the second event in $S$ arrives, we perform the same test as above. As $S[2].e=A=L_1.e=L_2.e$ and $OM(e^3).\ell\ge 2$, both $L_1$ and $L_2$ (they are both activated) should be updated. Therefore, $S[2].t$ should be appended to the end of both $L_1$ and $L_2$, which results in the state shown in Figure~\ref{fig:runexm}(2). Similarly, as the second layer is not empty then, the third layer now becomes activated (\ie $OM(e^3).\ell=3$).

Afterward, the third event, $(B,3)$ arrives, we append it to $L_3$ and update $OM(e^3).\ell$ to 4. Now the last layer in OccMap is not empty, we have to perform another action, \emph{occurrence validation}, which is described in the following part.

\vspace{0ex}\btitle{Occurrence validation.} Whenever the bottom layer (\ie $L_k$) is updated (\ie appended by an arbitrary timestamp), it indicates that there exist some groups of timestamps in each layered list that construct a candidate minimum occurrence for the target serial episode $e^3$. Here, candidate occurrence means that it is the minimum occurrence of general serial episode without taken into account the time constraint. In fact, according to the \emph{list update} strategy, there is at least one candidate occurrence for the target serial episode once the last layer is updated.

\begin{theorem} Given that an OccMap $OM(e)$, where $|e|=k$ is updated according to \emph{list update} strategy, once $OM(e).\ell=k+1$ (\ie the last layered list is not empty), there are $\ge k$ entries in $OM(e)$, where no two entries belong to the same layered lists, that constitute an occurrence of $e$.
\end{theorem}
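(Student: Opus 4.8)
The plan is to prove the statement constructively, by exhibiting one explicit occurrence of $e$ formed from the per-layer minima, with Property~\ref{property} (Minimum monotonicity) as the main tool, since it is already available from the \emph{list update} strategy. First I would establish that at the instant $OM(e).\ell = k+1$ every layer $L_1,\ldots,L_k$ is non-empty. This follows from the update rule: the active-layer counter $\ell$ is advanced from $i$ to $i+1$ only after $L_i$ has received at least one timestamp, and $\ell$ never decreases during this filling phase; hence $\ell = k+1$ forces each of $L_1,\ldots,L_k$ to have been appended at least once. Consequently $OM(e)$ contains at least $k$ entries (one per layer, possibly more), which already accounts for the ``$\ge k$'' in the statement.

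Next I would invoke Property~\ref{property}. Since $OM(e).\ell - 1 = k$, minimum monotonicity yields the chain
$$\min L_1 < \min L_2 < \cdots < \min L_k.$$
I then select, for each layer $i$, the entry $t_i := \min L_i$. By construction no two of these belong to the same layered list, so there are exactly $k$ such entries. Because $L_i$ only ever stores timestamps of the signal $\phi_i$ (the \emph{list update} rule appends $S[j].t$ to $L_i$ only when $S[j].e = \phi_i$, and $L_i.e = \phi_i$), each $t_i$ is a timestamp at which $\phi_i$ occurs in $S$. Combined with the strict ordering $t_1 < t_2 < \cdots < t_k$, Definition~\ref{df:occur} certifies that $\langle t_1,\ldots,t_k\rangle$ is a genuine occurrence of $e$, completing the argument.

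I expect the only delicate point to be the justification that all $k$ layers are simultaneously non-empty at the moment $\ell$ reaches $k+1$ --- that is, faithfully translating the informal rule ``a layer can be updated only when the layers before it are non-empty'' into the invariant that $\ell = k+1$ implies $L_1,\ldots,L_k \neq \emptyset$ --- because Property~\ref{property} only asserts the ordering of the minima and tacitly presupposes that those minima exist. Once this invariant is pinned down, the strict inequalities of Property~\ref{property} do all the remaining work, and it is precisely the strictness (rather than $\le$) that guarantees the selected timestamps are distinct and increasing, as any occurrence of a serial episode requires. A secondary clarification worth one sentence is that the claim asserts the \emph{existence} of one valid selection, not that every choice of one entry per layer forms an occurrence; the layer minima are the canonical witnesses that make the construction succeed.
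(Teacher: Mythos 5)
Your proof is correct and follows essentially the same route as the paper's: both select the per-layer minima $\min L_1 < \cdots < \min L_k$ guaranteed by Property~\ref{property} and observe that they form an occurrence by total order. Your version is simply more careful, making explicit the invariant that $\ell = k+1$ forces every layer to be non-empty and that each $L_i$ holds only timestamps of $\phi_i$, both of which the paper's one-line proof takes for granted.
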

\begin{proof}
As the last list is not empty, $OM(e).\ell=k+1$. According to Property~\ref{property}, $\min L_1<\min L_2<\ldots<\min L_k$. If we select the entries corresponding to $\min L_1$, $\min L_2$, \ldots, $\min L_k$ from $L_1, \ldots, L_k$, respectively, the entries obviously constitute an occurrence of $e$ as they satisfy total order.
\end{proof}

In fact, there may exist many groups of entries that can constitute an occurrence of $e$. Recall that we are performing \emph{occurrence validation} once the last layered list is not empty (\ie appended by a timestamp).  In other words, $L_k$ only contains one entry, namely $L_k[1]$, when \emph{occurrence validation} is performed. That is, all the occurrences share the same end timestamp, $t_k$, at most one can affect the frequency according to Definition~\ref{df:frequency}. Therefore, we have to find the optimal occurrence, which is most probable to satisfy the time constraint, to validate. Intuitively, the optimal occurrence of $e$ should have the minimum time span (\ie $t_k-t_1$), which is in fact the \emph{minimum occurrence} shown in Definition~\ref{df:minocc}. As all the occurrences of $e$ share the same $t_k$, the minimum occurrence $Occ_{OPT}(e,S)$ that is most probable to satisfy $\tau$ should have the latest $t_1$. Therefore, to find the $Occ_{OPT}(e,S)$, we have to find the latest $t_1$ in $L_1$.

In order to find the $Occ_{OPT}(e,S)$ (\ie $[t_s,t_e]$), we traverse the OccMap in a bottom-up way. In particular, given the end stamp $t_e=L_k[1]$, we greedily find from $L_{k-1}$ a latest entry that appears before $t_e$, that is $\max L_{k-1}[j]$ subject to $L_{k-1}[j]<t_e$. Let $t_{k-1}$ be the selected entry in $L_{k-1}$, then we further greedily select from $L_{k-2}$ a latest entry that appears before $t_{k-1}$, say $t_{k-2}$. Afterwards, we iteratively perform the same selections in the upper layers, until $L_1$. In the end, we can obtain $t_1, \ldots, t_{k-1}$, which are greedily selected from $L_1, \ldots, L_{k-1}$, respectively. $t_1, \ldots, t_{k-1}, t_e$ constitute an occurrence of $e$. Obviously, $t_1$ is the latest\footnote{\scriptsize Any other $t_1^\prime >t_1$ in $L_1$ cannot be $t_s$, as it is definitely greater than $t_2$ according to our greedy selection strategy} $t_s$.

Till now, we have the minimum occurrence that is most probable to satisfy the time constraint $\tau$. Hence, we check whether its time span satisfy the time constraint by testing the inequality $t_e-t_s\le\tau$. If the test successes, we increase the frequency of the target time-constrained serial episode $e^\tau$ by $1$. Notably, if the test fails, any other occurrences will also fail, as they have smaller $t_s$ (\ie larger $t_e-t_s$).

For instance, given $e^3=\langle A,A,B\rangle$ and sequence $S$ in Example~\ref{ex2}, we have updated $OM(e)$ as $S[1], S[2]$ and $S[3]$ passed by. When $S[3]=(B,3)$ arrives, we have appended $3$ to the end of $L_3$. Once the bottom layer $L_3$ is not empty, we perform \emph{occurrence validation} as described before. In particular, we greedily find the $Occ_{OPT}(e,S)$ as $\langle1,2,3\rangle$ shown in Figure~\ref{fig:runexm}. Afterwards, we test whether its time span (\ie $3-1=2$) satisfies $\tau$ (\ie $3$). As the test successes (the corresponding entries in $OM(e)$ are circled and marked in green in Figure~\ref{fig:runexm}), we increase the frequency of $e^3$ by $1$.

\vspace{0ex}\btitle{Invalid entries elimination.} Once \emph{occurrence validation} is performed, we need to immediately eliminate invalid entries from OccMap. Depending on whether time constraint test in \emph{occurrence validation} successes or not, the elimination process varies.

If the minimum occurrence, say $Occ_{OPT}(e,S)$, found from \emph{occurrence validation} is validated as satisfying the time constraint, \ie $t_e-t_s\le\tau$, all the other entries left in $OM(e)$ are useless then. The reason is, any other occurrences that consist of any of these entries, which are smaller than $t_e$, definitely overlap with $Occ_{OPT}(e,S)$, which deviates from Definition~\ref{df:frequency}. Therefore, once the \emph{occurrence validation} successes, all the other entries in $OM(e)$ are invalid anymore, and are immediately removed from $OM(e)$. Besides, the active layer now are reset to $L_1$.

As shown in Figure~\ref{fig:runexm}(3) and (3+), except for the occurrence of the episode tested (\ie circled green entries in Figure~\ref{fig:runexm}(3)), all the other entries (\ie underlined blue entry in Figure~\ref{fig:runexm}(3)) should be eliminated, which results in Figure~\ref{fig:runexm}(3+). The same operations can be found from Figure~\ref{fig:runexm}(8) and Figure~\ref{fig:runexm}(8+), as well as Figure~\ref{fig:runexm}(15) and Figure~\ref{fig:runexm}(15+). All the lists now become empty, thus $OM(e).\ell=1$.

Otherwise, when the minimum occurrence $Occ_{OPT}(e,S)$ fails to satisfy time constraint $\tau$, we also need to find those invalid entries to eliminate from $OM(e)$. Differently, the invalid entries are no longer all the left ones in $OM(e)$ in this case. Instead, although the entries that constitute the minimum occurrence $Occ_{OPT}(e,S)$ fails to pass time constraint test, the other entries may be further used to constitute another minimum occurrence. In order to show which entries left are useless to further constitute other minimum occurrences, we present the following theory.
\begin{theorem}\label{lem:invalident}
Given that $OM(e^\tau)$ ($|e^\tau|=k$) is updated according to \emph{list update} strategy and $Occ_{OPT}(e,S)$ is found and validated by \emph{occurrence validation} process, if $Occ_{OPT}(e,S)$ which consists of $\langle t_1,\ldots,t_k\rangle$ ($t_1<\ldots<t_k$) fails to satisfy the given time constraint $\tau$ (\ie $t_k-t_1>\tau$), no other minimum occurrences $Occ_{OPT}^\prime(e,S)$ with $\langle t_1^\prime,\ldots,t_k^\prime\rangle$, where $t_k^\prime>t_k$ and $\exists i<k$ such that $t_i^\prime\le t_i$, can satisfy the time constraint either.
\end{theorem}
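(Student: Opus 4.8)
The plan is to exploit the greedy, latest-first structure of the \emph{occurrence validation} procedure, which guarantees that each selected coordinate $t_i$ (for $i<k$) is the \emph{largest} entry of $L_i$ lying strictly before $t_{i+1}$, i.e. $t_i=\max\{L_i[j]:L_i[j]<t_{i+1}\}$. The engine of the argument will be a backward-induction monotonicity lemma: for any competing minimum occurrence $Occ_{OPT}^\prime(e,S)=\langle t_1^\prime,\ldots,t_k^\prime\rangle$ whose coordinates are entries of the corresponding surviving lists, if $t_i^\prime\le t_i$ holds at some layer $i<k$, then in fact $t_m^\prime\le t_m$ for every $m\le i$, and in particular $t_1^\prime\le t_1$.

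First I would record the characterization of the validated occurrence $\langle t_1,\ldots,t_k\rangle$, namely that $t_{i}$ is obtained greedily as the latest entry of $L_i$ appearing before $t_{i+1}$. Then I would prove the monotonicity lemma. Starting from the hypothesis $t_i^\prime\le t_i$, the induction step observes that, since $Occ_{OPT}^\prime(e,S)$ is an occurrence of $e$, its coordinates are strictly increasing, so $t_{m-1}^\prime<t_m^\prime\le t_m$; as $t_{m-1}^\prime$ is an entry of $L_{m-1}$ lying strictly before $t_m$, the maximality defining $t_{m-1}$ forces $t_{m-1}^\prime\le t_{m-1}$. Iterating this down to layer $1$ yields $t_1^\prime\le t_1$.

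Finally, I would combine $t_1^\prime\le t_1$ with the standing assumption $t_k^\prime>t_k$ to get $t_k^\prime-t_1^\prime\ge t_k^\prime-t_1>t_k-t_1>\tau$, where the last inequality is precisely the hypothesis that $Occ_{OPT}(e,S)$ failed the time-constraint test. Hence $Occ_{OPT}^\prime(e,S)$ also violates $\tau$, which is exactly the claim.

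I expect the delicate point to be the backward-induction step, specifically justifying that every coordinate $t_{m-1}^\prime$ of the competing occurrence really is an entry of $L_{m-1}$ so that the greedy maximality of $t_{m-1}$ can be invoked. This is where the \emph{list update} discipline and Property~\ref{property} must be used with care: entries are appended to each $L_{m-1}$ in increasing timestamp order, and the competing occurrences considered during \emph{invalid entries elimination} are drawn from the surviving OccMap entries. Particular attention is needed for episodes with repeated signals (\eg $\langle A,A,B\rangle$), where a single stream event may be appended to several layers; I would verify that this replication does not break the maximality argument, since each layer is still searched independently and the strict-increase constraint $t_{m-1}^\prime<t_m^\prime$ continues to hold. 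The remaining chain of inequalities is then immediate.
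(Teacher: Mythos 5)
Your proposal is correct and follows essentially the same route as the paper's proof: both rest on the backward-induction monotonicity argument that the greedy, latest-first selection $t_{i-1}=\max\{L_{i-1}[j]:L_{i-1}[j]<t_i\}$ propagates $t_i^\prime\le t_i$ down to $t_1^\prime\le t_1$, and then conclude $t_k^\prime-t_1^\prime>t_k-t_1>\tau$. The only differences are cosmetic (you argue directly where the paper frames it as a contradiction) plus your added care about why each $t_{m-1}^\prime$ is genuinely an entry of $L_{m-1}$, which the paper leaves implicit.
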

\begin{proof}
Suppose $Occ_{OPT}^\prime(e,S)$ that consists of $\langle t_1^\prime,\ldots,t_k^\prime\rangle$, where $t_k^\prime>t_k$ and $\exists i<k$ such that $t_i^\prime\le t_i$, satisfy the time constraint $\tau$, then
\begin{equation}\label{eq:test}
t_k^\prime-t_1^\prime\le\tau.
\end{equation}

As we are iteratively selecting the largest entry in $L_{i-1}[j]$ subject to that $L_{i-1}[j]<t_i$ according to the bottom-up minimum occurrence finding strategy presented above, we can find $L_{i-1}[j]=t_{i-1}$ and $L_{i-1}[j^\prime]=t_{i-1}^\prime$. If $t_i^\prime\le t_i$, it is straightforward to know that $j^\prime\le j$, thus $t_{i-1}^\prime\le t_{i-1}$. Similarly, we can prove that $\forall j\le i$, $t_j^\prime\le t_j$.

Therefore, $t_1^\prime\le t_1$, which means $t_k-t_1<t_k^\prime-t_1^\prime$ as $t_k^\prime>t_k$. As $t_k-t_1>\tau$, it is easy to know $t_k^\prime-t_1^\prime>\tau$, which contradicts with Equation~\ref{eq:test}. Hence, $Occ_{OPT}^\prime(e,S)$ cannot satisfy the time constraint $\tau$.
\end{proof}

Suppose $Occ_{OPT}(e,S)$, which consists of $[t_1,\ldots,t_k]$ ($t_1<\ldots<t_k$), fails to satisfy time constraint $\tau$, according to the above theory, it is easy to know that any entry $t_i^\prime\le t_i$ in list $L_i$ cannot be used to generate an occurrence that passes time constraint test. Therefore, if the minimum occurrence $Occ_{OPT}(e,S)$ fails to satisfy time constraint $\tau$, we need to eliminate the entries $t_i^\prime\le t_i$ in each layered list $L_i$ for all $1\le i\le k$. Moreover, we need to eliminate some other entries in $L_i$ to guarantee that $\min L_1<\min L_2<\ldots<\min L_k$, as those entries not satisfying this property are also useless for minimum occurrence extraction. Besides, the active layer is updated accordingly after the elimination.

As shown in Figure~\ref{fig:runexm}(13) and Figure~\ref{fig:runexm}(13+), the occurrence of the episode (\ie circled red entries in Figure~\ref{fig:runexm}(13)), namely $\langle9,12,13\rangle$, fails to pass the time constraint test as $13-9>3$. Then, in each $L_i$, we eliminate all the entries $t_i^\prime\le t_i$. In $L_1$, we eliminate $9$ and any other entries before that. Similarly, in $L_2$ and $L_3$, we eliminate $12$ and $13$, respectively. The other entries, \ie $12$ in $L_1$, is left in $OM(e)$, the rest lists are all empty again. Therefore, we reset $OM(e).\ell=2$.

With the help of all these operations above, an OccMap has the following interesting features.
\begin{itemize}
  \item An OccMap corresponds to exactly one time-constrained serial episode.
  \item The last/bottom layered list contains no more than one entry; there is at most one occurrence for the target time-constrained serial episode.
  \item The entries in all the layered lists satisfy: $\min L_1 < \ldots < \min L_k$. On the other hand, as timestamps appended into the same list strictly follow time sequence, $\min L_i=L_i[1]$. Therefore, the above property can be rewritten as $L_1[1]<\ldots<L_k[1]$.
  \item Notably, to save memory, after each \emph{list update} operation, we additionally check the inserted timestamp, say $S[j].t$, against the first entry in $L_i$, if $S[j].t-L_i[1]>\tau$, $L_i[1]$ cannot be used to generate any minimum occurrence that satisfies time constraint $\tau$. Therefore, in this case $L_i[1]$ is also be eliminated during \emph{list update}. In this way, the size of each list $L_i$ is in fact upper bounded by $\tau\rho$ if the event in the stream arrives with a constant speed $\rho$.
\end{itemize}

\renewcommand{\algorithmiccomment}[1]{\hskip3em$/*$ #1 $*/$}
\begin{algorithm}[tb]
\caption{ONCE algorithm}
\label{alg:Framwork}
\begin{algorithmic}[1]
\REQUIRE Streaming sequence $S=\langle (s_1,t_1), (s_2,t_2), \ldots, (s_n,t_n),\ldots\rangle$, target time-constrained serial episode $e^\tau=\langle \phi_1,\phi_2,\dots,\phi_k\rangle$
\ENSURE $freq(e^\tau,S)$: the non-overlapped frequency of $e^\tau$ in $S$
\STATE Initialize $OM(e^\tau)$ for $e$ with $k$ empty lists $L_1,\ldots,L_k$ that are hierarchically organized
\FOR {each event $S[i]$ in $S$ arrives}
    \STATE {$ListUpdate(OM(e^\tau),S[i])$}
    \COMMENT {Algorithm~\ref{alg:update}}
    \IF {the bottom layer is not empty}
        \STATE {$flag\leftarrow Validate\&Eliminate(OM(e^\tau))$}
        \COMMENT {Algorithm~\ref{alg:valeli}}
        \IF {$flag$ is TRUE}
            \STATE {$freq(e^\tau,S)++$}
        \ENDIF
    \ENDIF
\ENDFOR
\RETURN $freq(e^\tau,S)$
\end{algorithmic}
\end{algorithm}
\vspace{0ex}\subsection{The complete ONCE algorithm}\label{ssec:once}
Figure~\ref{fig:runexm} shows the complete one-pass process of mining the non-overlapped frequency for time-constrained serial episode $e^3=\langle A,A,B\rangle$ within $S$ shown in Example~\ref{ex2}. Notably, all the other events that do not appear in $e^3$ (\ie $C$) are ignored in the process. Each successful time constraint test over the minimum occurrences is marked in green, the unsuccessful ones are marked in red. It is easy to know from the figure that the non-overlapped frequency for $e^3$ in $S$ is $3$, \ie the number of successful time constraint tests. The detailed algorithms are shown in Algorithm~\ref{alg:Framwork},~\ref{alg:update} and~\ref{alg:valeli}.

In Algorithm~\ref{alg:Framwork}, given the input of streaming sequence $S$ and target time-constrained serial episode $e^\tau$, ONCE algorithm first initializes an OccMap for $e^\tau$ (Line 1). As each event in the stream arrives, ONCE first performs \emph{list update} (\ie Algorithm~\ref{alg:update}) based on the event (Lines 2-3). When the last layer in OccMap is not empty, we perform \emph{occurrence validation} to find the minimum occurrence and test whether it satisfies $\tau$. \emph{Invalid entries elimination} is performed immediately after that (Lines 4-5 and Algorithm~\ref{alg:valeli}). If the time constraint test successes, the frequency of $e^\tau$ is increased by $1$ (Lines 6-7). Finally, the frequency is returned (Line 11).

Algorithm~\ref{alg:update} works as follows. Given the OccMap $OM(e^\tau)$ to be updated and event $S[i]$, we check every activated list $L_j$ which wait for update, if $S[i].e$ matches the event corresponding to the $L_j$, we append $S[i].t$ to the end of $L_j$ (Lines 2-3). Afterwards, we perform a local check in $L_j$ in order to eliminate out-of-date entries (\ie old entries that cannot constitute a minimum occurrence that satisfies $\tau$) from $L_j$ (Lines 4-6). Finally, we update the active layer to the next empty list (Line 9). Obviously, the time complexity for Algorithm~\ref{alg:update} is $O(k)$ where $k$ is the length of $e^\tau$.

In Algorithm~\ref{alg:valeli}, we first extract the minimum occurrence from $OM(e^\tau)$ (Lines 1-5) and then eliminate invalid entries (Lines 6-19). To extract the minimum occurrence, we first set the right bound of the occurrence interval $t_k$ as $L_k[1]$, which is the only entry in $L_k$ (Line 1). Afterwards, we iteratively find from each upper layer $t_i$ as the latest timestamp that appears before $t_{i+1}$. This process continues until $t_1$ (Lines 2-5). Therefore, $[t_1,\ldots,t_k]$ constitute a minimum occurrence for $e^\tau$. The time complexity of this process is $O(k\log\overline{|L|})$\footnote{\scriptsize In the implementation, as each $t_i$ always locates in the end of $L_i$, we alternatively use linear search from the end of $L_i$ in Line 3, the average time of which is better than binary search in practice. Similar strategy also applies to Line 15.}. As the length of each list in $OM(e^\tau)$ is in fact upper bounded by $\tau\rho$ if the event in the stream arrives with a constant speed $\rho$. According to Algorithm~\ref{alg:update}, in Line 4 and 5, we know that $(L_j[k].t-L_j[1].t)<\tau$. Considering the fact that $\rho$ is a positive number, as a result, $\rho(L_j[k].t-L_j[1].t)<\tau\rho$. Actually, considering the ListUpdate process, we have $k\leq \rho(L_j[k].t-L_j[1].t)$. With the two inequalities above, we can conclude that $k<\tau\rho$, which proves that the upper bound of $L_j$ in $OM(e^\tau)$ is $\tau\rho$. The time complexity is in fact $O(k\log\tau)$ if $\rho$ is fixed. Afterwards, we test whether the extracted minimum occurrence satisfies the time constraint. If the test successes, we eliminate all entries from the OccMap and return (Lines 6-9). Otherwise, we eliminate all entries in each list $L_i$ subject to $L_i[j]\le t_i$ (Lines 11-13). Besides, we also need to make sure $L_1[1]<\ldots<L_{k-1}[1]$ by eliminating some other entries, as those entries are also useless in constituting further minimum occurrences (Lines 14-16). Finally, we update the active layer as the first empty layered list (Lines 17-18). It is easy to see the complexity of elimination is also $O(k\log\tau)$ as searching from a list an entry (Line 3 and 15) takes $O(\log\tau)$ using binary search. Therefore, the complexity of Algorithm~\ref{alg:valeli} is $O(k\log\tau)$.

In all, the time complexity for ONCE algorithm in Algorithm~\ref{alg:Framwork} to process a single event is $O(k\log\tau)$. Taking into account the number of events in the stream $S$, the time complexity of processing $n$ events is all together $O(nk\log\tau)$.

\begin{algorithm}[tb]
\caption{ListUpdate algorithm}
\label{alg:update}
\begin{algorithmic}[1]
\REQUIRE OccMap $OM(e^\tau)$ and the next event in stream sequence $S[i]$
\ENSURE updated OccMap $OM(e^\tau)$
\FOR {all activated lists $L_j$ ($j\le OM(e^\tau).\ell$) and $S[i].e\in e^\tau$}
    \IF {$S[i].e=L_j.e$}
        \STATE {append $S[i].t$ to the end of $L_j$}
        \IF {$S[i].t-L_j[1]>\tau$}
        \STATE {remove $L_j[1]$ from $L_j$}
        \ENDIF
    \ENDIF
\ENDFOR
\STATE Update $OM(e^\tau).\ell$ to the next empty list
\RETURN $OM(e^\tau)$
\end{algorithmic}
\end{algorithm}

\begin{algorithm}[tb]
\caption{Validate\&Eliminate algorithm}
\label{alg:valeli}
\begin{algorithmic}[1]
\REQUIRE OccMap $OM(e^\tau)$
\ENSURE $flag$ indicating whether the occurrence passes time constraint test
\STATE $t_k\leftarrow L_k[1]$
\FOR{$i$ in $k-1$ to $1$}
    \STATE {find the maximum $j$ subject to $L_i[j]<t_{i+1}$}
    \STATE {$t_i\leftarrow L_i[j]$}
\ENDFOR
\IF {$t_k-t_1\le\tau$}
    \STATE {remove all entries from $OM(e^\tau)$}
    \STATE {reset $OM(e^\tau).\ell$ as $1$}
    \RETURN TRUE
\ELSE
    \FOR{all lists $L_i$ in $OM(e^\tau)$}
        \STATE {remove from $L_i$ all entries $L_i[j]\le t_i$}
    \ENDFOR
    \FOR{$i$ in $2$ to $k-1$}
        \STATE {remove from $L_i$ all entries $L_i[j]\le L_{i-1}[1]$}
    \ENDFOR
    \STATE {reset $OM(e^\tau).\ell$ to $\max\limits_{|L_i|>0}{i}+1$}
    \RETURN FALSE
\ENDIF
\end{algorithmic}
\end{algorithm}

\vspace{0ex}\subsection{Correctness of ONCE algorithm}\label{ssec:correct}
In this part, we discuss the correctness of ONCE algorithm. In particular, we need to show that ONCE algorithm can correctly answer the problem in Definition~\ref{def:pfemin}, namely counting the non-overlapped frequency $freq(e^\tau,S)$ of given time-constrained serial episode $e^\tau$ as the event in streaming sequence $S$ passes by. To this end, we present a pair of lemmas below, based on which we can finally prove the correctness of ONCE.

\vspace{0ex}\begin{lemma}\label{lemma1} Suppose the frequency of $e^\tau$ in $S$ returned by ONCE is denoted by $freq'(e^\tau,S)$ and the ground truth is denoted by $freq(e^\tau,S)$, then $freq'(e^\tau,S)\le freq(e^\tau,S)$.
\end{lemma}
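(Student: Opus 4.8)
The plan is to read $freq(e^\tau,S)$ as the maximum cardinality of a family of pairwise non-overlapped occurrences of $e^\tau$ in $S$ (the quantity fixed by Definition~\ref{df:frequency}), and then to show that the occurrences ONCE actually counts form \emph{one} such family. Since $freq(e^\tau,S)$ is by definition the largest size any pairwise non-overlapped family can have, any concrete valid family — in particular the one produced by ONCE, whose size is exactly $freq'(e^\tau,S)$ — cannot exceed it, yielding $freq'(e^\tau,S)\le freq(e^\tau,S)$. The whole argument thus reduces to verifying two properties of the sequence of occurrences counted by the algorithm: that each is genuine, and that they are pairwise non-overlapped.

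First I would verify that every occurrence ONCE counts is a genuine occurrence of $e^\tau$. ONCE increments the counter only inside \emph{occurrence validation}, when the greedily extracted $Occ_{OPT}(e,S)=\langle t_1,\ldots,t_k\rangle$ passes the test $t_k-t_1\le\tau$. By Property~\ref{property} together with the bottom-up greedy selection (each $t_i$ is chosen from $L_i$ subject to $t_i<t_{i+1}$, with $L_i.e=\phi_i$), these timestamps satisfy $t_1<t_2<\cdots<t_k$ and match the events $\phi_1,\ldots,\phi_k$ in order, while the passed test certifies $t_k-t_1\le\tau$. Hence each counted tuple is a valid occurrence of the time-constrained episode $e^\tau$, so these tuples really are members of the family over which $freq(e^\tau,S)$ is maximized.

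The main step is to show that any two counted occurrences are non-overlapped. Here I would exploit the fact that a \emph{successful} validation is always followed by an \emph{invalid entries elimination} that wipes \emph{all} entries from $OM(e^\tau)$ and resets $OM(e^\tau).\ell$ to $1$ (Algorithm~\ref{alg:valeli}, success branch). Consider two consecutively counted occurrences, produced at the arrivals of $S[p]$ and $S[q]$ with $p<q$; the end timestamp of the first is $t_k=S[p].t$, the timestamp of the event that filled $L_k$. After the reset at $S[p]$ all lists are empty, so every entry present when $S[q]$ is processed — in particular the start timestamp $t_1'$ of the second occurrence, which lies in $L_1$ — was appended by some event arriving after $S[p]$. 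Since the stream timestamps are strictly increasing, $t_1'>S[p].t=t_k$, which is exactly the non-overlap condition of Definition~\ref{df:frequency}; transitivity then gives pairwise non-overlap for all counted occurrences. The delicate point, and the one I expect to require the most care, is precisely this wipe-and-monotonicity argument: I must ensure that any \emph{failed} validations occurring between the two successful ones (which only remove the entries $\le t_i$ and so leave some old entries behind) cannot reintroduce a start timestamp preceding $S[p].t$ — but this is immediate, since those leftover entries themselves postdate the reset at $S[p]$.

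Combining the two steps, the occurrences counted by ONCE constitute a pairwise non-overlapped family of valid occurrences of $e^\tau$ of size $freq'(e^\tau,S)$. Because $freq(e^\tau,S)$ is the maximum size attainable by such a family, we conclude $freq'(e^\tau,S)\le freq(e^\tau,S)$.
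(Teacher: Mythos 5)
Your proof is correct, and it rests on the same two algorithmic facts as the paper's own argument: a counted occurrence is only ever a greedily extracted, order-respecting tuple that has passed the test $t_k-t_1\le\tau$, and a successful validation wipes $OM(e^\tau)$ entirely, so the start timestamp of the next counted occurrence strictly postdates the end timestamp of the previous one. Where you differ is the logical packaging. The paper argues by contradiction: it asserts that $freq'(e^\tau,S)>freq(e^\tau,S)$ forces one of exactly two bad events (counting an occurrence that fails the time test, or counting two overlapping passing occurrences) and refutes each in turn. You argue directly, exhibiting the counted occurrences as a concrete pairwise non-overlapped family of genuine occurrences of $e^\tau$ and invoking the maximality inherent in the definition of non-overlapped frequency. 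Your route is arguably tighter: the paper never justifies why an overcount must reduce to precisely those two cases, whereas your witness-family argument needs no such case analysis, and your explicit check that intervening \emph{failed} validations cannot resurrect pre-wipe timestamps is a point the paper glosses over. The trade-off is that your argument leans on reading Definition~\ref{df:frequency} as ``maximum cardinality of a pairwise non-overlapped family of occurrences,'' which the paper states only implicitly (it defines when two occurrences are non-overlapped but never spells out the maximization); this is the standard definition in the cited literature, so the reading is safe, but you should state it as an assumption rather than attribute it to the definition as written.
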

\begin{proof}
Given in Appendix~\ref{app1}.
\end{proof}

\vspace{0ex}
\begin{lemma}\label{lemma2} Suppose the frequency of $e^\tau$ in $S$ returned by ONCE is denoted by $freq'(e^\tau,S)$ and the ground truth is denoted by $freq(e^\tau,S)$, then $freq'(e^\tau,S)\ge freq(e^\tau,S)$.
\end{lemma}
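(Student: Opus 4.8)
The plan is to pair this lemma with Lemma~\ref{lemma1} so that together they give $freq'(e^\tau,S)=freq(e^\tau,S)$; here I only need the lower bound, for which I would use a greedy-stays-ahead argument. First I would make the ground truth explicit: by Definition~\ref{df:frequency}, $freq(e^\tau,S)$ is the maximum cardinality of a set of pairwise non-overlapped occurrences of $e^\tau$, each meeting the time constraint. Fix one optimal such set $O=\{o_1,\ldots,o_m\}$ with $m=freq(e^\tau,S)$ and write $o_j=\langle a^j_1,\ldots,a^j_k\rangle$. Since the members of $O$ are pairwise non-overlapped, I may order them by end time so that $a^1_k<a^2_k<\cdots<a^m_k$ and, by non-overlap, $a^{j+1}_1>a^j_k$ for every $j$.

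Second, I would establish by induction on $j$ the invariant that ONCE outputs at least $j$ counts and its $j$-th count $g_j=\langle c^j_1,\ldots,c^j_k\rangle$ satisfies $c^j_k\le a^j_k$; this instantly yields $freq'(e^\tau,S)\ge m=freq(e^\tau,S)$. For the inductive step, the hypothesis $c^{j-1}_k\le a^{j-1}_k$ combined with $a^j_1>a^{j-1}_k$ gives $a^j_1>c^{j-1}_k$, so every event of $o_j$ arrives strictly after ONCE has committed and reset following its $(j-1)$-th count. Thus, starting from that reset state, all $k$ events of $o_j$ are still to come; by the \emph{list update} rule they fill the layers of $OM(e^\tau)$ in order, so the bottom layer becomes non-empty at some end timestamp $t_e\le a^j_k$, triggering \emph{occurrence validation}. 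Any extra counts ONCE happens to make before $a^j_1$ only push its counts earlier, which is consistent with the stays-ahead invariant.

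Third, and this is where I expect the real work to lie, I would show ONCE necessarily produces a $j$-th count ending no later than $a^j_k$. If some validation after the $(j-1)$-th count and before end time $a^j_k$ already succeeds, the step closes with $c^j_k\le a^j_k$. Otherwise every such validation fails, and I must argue $o_j$ survives \emph{invalid entries elimination}. Theorem~\ref{lem:invalident} is the key tool: a failing minimum occurrence $\langle t_1,\ldots,t_k\rangle$ with $t_k<a^j_k$ forces, via the theorem applied to the valid later-ending $o_j$, that $a^j_i>t_i$ for all $i<k$; hence $a^j_1,\ldots,a^j_{k-1}$ escape the elimination of entries $\le t_i$, and a short check shows the monotonicity cleanup enforcing Property~\ref{property} also spares this increasing chain. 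Consequently $a^j_1<\cdots<a^j_k$ remain present in $OM(e^\tau)$, and when the matching event $a^j_k$ finally arrives the bottom layer fires a validation at end time $a^j_k$; the bottom-up greedy extraction then returns $\langle t_1,\ldots,t_k=a^j_k\rangle$ with $t_i\ge a^j_i$ at every layer, so $t_k-t_1\le a^j_k-a^j_1\le\tau$, the test passes, and a count with $c^j_k\le a^j_k$ results.

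Finally, I would record the base case $j=1$ as the same argument with the reset state taken to be the initial empty $OM(e^\tau)$, and conclude $freq'(e^\tau,S)\ge freq(e^\tau,S)$. The delicate point throughout is exactly the elimination step: I must lean on Theorem~\ref{lem:invalident} to certify that discarding entries after a failed time-constraint test never removes an entry belonging to a still-needed optimal occurrence, which is what lets ONCE's earliest-completion choices be matched one-for-one against $O$ without ever falling behind.
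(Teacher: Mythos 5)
Your proposal is correct in substance and rests on the same technical core as the paper's proof, but it is organized differently. The paper argues by contradiction on a single hypothetical missed occurrence: its Step~1 is an exhaustive case analysis over every elimination site (Lines 7, 12, 15 of Algorithm~\ref{alg:valeli} and Line 5 of Algorithm~\ref{alg:update}) showing that each entry $t_i$ of that occurrence is still in $OM(e^\tau)$ when $t_k$ arrives, and its Step~2 shows the bottom-up extraction then returns an occurrence with the same start time, so the count is incremented. You instead fix an explicit maximum set of pairwise non-overlapped, constraint-satisfying occurrences and run a greedy-stays-ahead induction matching ONCE's counts against it one-for-one, delegating the survival argument to Theorem~\ref{lem:invalident}. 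Your formulation is tighter on the combinatorial side: the paper never pins down that the ground truth is the cardinality of a maximum non-overlapped set, so its ``one missed occurrence'' contradiction is less explicit about why a single miss forces a global undercount, which your induction handles cleanly. Two cautions. First, Theorem~\ref{lem:invalident} as stated and proved concerns a second \emph{minimum} occurrence produced by the same bottom-up greedy extraction, whereas your $o_j$ is an arbitrary member of the optimal set; the inequality you need ($a^j_i>t_i$ for all $i<k$) does hold, but you should derive it by rerunning the greedy-maximality argument directly (if $a^j_i\le t_i$ then $a^j_{i-1}\le t_{i-1}$, and so on down to $a^j_1\le t_1$, whence $t_k-t_1\le t_k-a^j_1<a^j_k-a^j_1\le\tau$, contradicting the failed test) rather than by citing the theorem verbatim. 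Second, your ``short check'' is precisely where the paper spends most of its effort: you must also verify that Line 5 of Algorithm~\ref{alg:update} never evicts an $a^j_i$ (any event arriving by $a^j_k$ is within $\tau$ of every $a^j_i$ since $a^j_k-a^j_1\le\tau$) and that the monotonicity cleanup in Lines 14--16 spares the chain (after each step $L_{i-1}[1]\le a^j_{i-1}<a^j_i$); both go through, but they belong in the written proof.
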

\begin{proof}
Given in Appendix~\ref{app2}.
\end{proof}

\vspace{0ex}\begin{theorem}\label{thm:correct} ONCE algorithm (Algorithm~\ref{alg:Framwork},~\ref{alg:update} and~\ref{alg:valeli}) can correctly answer the time-constrained frequency counting problem in Definition~\ref{def:pfemin}.
\end{theorem}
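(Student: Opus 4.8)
The plan is to derive the theorem directly from the two bounding lemmas, so that the proof itself reduces to an antisymmetry argument. First I would observe that Definition~\ref{def:pfemin} demands the algorithm report $freq(e^\tau,S(i))$ at the instant each event $S[i]$ arrives; since both Lemma~\ref{lemma1} and Lemma~\ref{lemma2} are stated for an \emph{arbitrary} streaming sequence, they apply verbatim to every finite prefix $S(i)$. Hence it suffices to establish the single equality $freq'(e^\tau,S)=freq(e^\tau,S)$ between the count $freq'$ maintained by Algorithm~\ref{alg:Framwork} and the ground-truth non-overlapped frequency of Definition~\ref{df:frequency}.

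I would then simply conjoin the two lemmas. Lemma~\ref{lemma1} supplies $freq'(e^\tau,S)\le freq(e^\tau,S)$ (soundness: every increment of the counter is triggered only by a genuine minimum occurrence passing the test $t_e-t_s\le\tau$, and the \emph{invalid entries elimination} step ensures the counted occurrences are pairwise non-overlapped, so the algorithm never overcounts), while Lemma~\ref{lemma2} supplies the reverse inequality $freq'(e^\tau,S)\ge freq(e^\tau,S)$ (completeness: the greedy bottom-up extraction together with the elimination rule never discards an occurrence that a maximum non-overlapped packing would have used). Because $\le$ is antisymmetric, the two inequalities together force $freq'(e^\tau,S)=freq(e^\tau,S)$, which is exactly the quantity required by Definition~\ref{def:pfemin}, and since this holds after processing each $S[i]$, the algorithm answers the counting problem correctly at every step.

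Finally, I would be explicit about where the real work sits: at the level of this theorem there is no genuine obstacle, as it is essentially a one-line corollary of the two lemmas, and the substance is deferred to the appendices. The harder half is the completeness direction, Lemma~\ref{lemma2}: one must argue that greedily choosing the latest feasible timestamp in each layer and then deleting exactly the entries $L_i[j]\le t_i$ after a failed test never forecloses a later valid time-constrained occurrence. That argument rests on the minimum-monotonicity Property~\ref{property} and on Theorem~\ref{lem:invalident}, which certifies that the discarded entries cannot participate in any subsequent occurrence satisfying $\tau$. I would therefore treat those two results as the load-bearing inputs and present the proof of Theorem~\ref{thm:correct} itself merely as their conjunction via antisymmetry.
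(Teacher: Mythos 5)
Your proposal matches the paper's own proof, which is literally the one-line observation that the theorem ``directly follows'' from Lemma~\ref{lemma1} and Lemma~\ref{lemma2} via antisymmetry of $\le$. Your additional remarks correctly locate the substantive work in the two lemmas' appendix proofs, so nothing is missing.
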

\begin{proof} Directly follows Lemma~\ref{lemma1} and~\ref{lemma2}.
\end{proof}

\vspace{0ex}\section{ONCE+ ALGORITHM}\label{sec:once+}
In previous section we have presented ONCE algorithm, ONCE can compute the overlapped frequency of target time-constrained serial episode. In fact, it can also be adapted to compute distinct frequency with series of modifications. In this part, we propose the modified version, namely ONCE+, to compute the distinct frequency of time-constraint serial episodes.

ONCE+ also utilizes OccMap to store the timestamps of events in target serial episode, the only difference is in \emph{Validate\&Eliminate} algorithm. In order to distinguish ONCE and ONCE+, we name the modified algorithm as \emph{Validate\&Eliminate+} algorithm. The detailed algorithm is shown in Algorithm~\ref{alg:valeli+}.

In Algorithm~\ref{alg:valeli+}, we need to find the occurrence which meets the time-constrained condition from $OM(e^\tau)$. Firstly, we compare all the entries of $L_1$ with $L_k[1]$. If we failed find the minimum entry $L_1[i]$ which meets $L_k[1]-L_1[i]<\tau$, eliminate all entries from the OccMap and return (Lines 26-29). If we find it, set the first layer $t_1$ as $L_1[i]$. Afterwards, we iteratively find from each upper layer $t_i$ as the latest timestamp that appears before $t_{i+1}$. This process continues until $t_k$ (Line 1-9). If we failed to find $t_i$ ($1<i<k$) in any layer, we also have to eliminate all entries from the OccMap and return (Line 21-25). Therefore, $[t_1,\ldots,t_k]$ constitute a distinct occurrence for $e^\tau$. Then, we eliminate all the entries from OccMap which is no later than $t_i$ (Line 10-20). Finally, we update the active layer as the first empty layered list.

Notably, in ONCE we find $t_i$ from bottom to top of the OccMap, then test whether the extracted minimal occurrence satisfies the time constraint condition. However, in ONCE+, we firstly need to find the occurrence from $OM(e^\tau)$ which meets the time-constrained. Then find $t_i$ from top to bottom of the OccMap. It is easy to see, the only difference between ONCE and ONCE+ is that the process of seeking $t_i$ from OccMap is opposite, so the complexity of Algorithm~\ref{alg:valeli+} is the same as Algorithm~\ref{alg:valeli}.

\begin{algorithm}[tb]
\caption{Validate\&Eliminate+ algorithm}
\label{alg:valeli+}
\begin{algorithmic}[1]
\REQUIRE OccMap $OM(e^\tau)$
\ENSURE $flag$ indicating whether the occurrence passes time constraint test

\IF {find the minimum i subject to $L_k[1]-L_1[i]<\tau$}
    \STATE $t_1\leftarrow L_1[i]$
    \STATE {p=0}
    \FOR{$i$ in $2$ to $k$}
    	\IF {find the minimum j subject to $L_i[j]>t_{i-1}$}
    	\STATE {$t_i\leftarrow L_i[j]$}
    	\STATE {P++}
    	\ENDIF
    \ENDFOR
    \IF {p=k-1}
    	\FOR{all list $L_i$ in $OM(e^\tau)$}
    		\STATE {remove from $L_i$ all entries $L_i[j]\le t_i$}
    		\FOR {$y$ in $1$ to $k$}
    			\IF {find y subject to $L_i[j]=t_y$}
    				\STATE {remove $L_i[j]$ from $L_i$}
    			\ENDIF
    		\ENDFOR
    	\ENDFOR
    	\STATE {reset $OM(e^\tau).\ell$ to $\max\limits_{|L_i|>0}{i}+1$}
    	\RETURN TRUE
    \ELSE
    	\STATE {remove all entries from $OM(e^\tau)$}
    	\STATE {reset $OM(e^\tau).\ell$ as $1$}
    	\RETURN FALSE
    \ENDIF
\ELSE
    \STATE {remove all entries from $OM(e^\tau)$}
    \STATE {reset $OM(e^\tau).\ell$ as $1$}
    \RETURN FALSE
\ENDIF
\end{algorithmic}
\end{algorithm}

In order to illustrate the characteristics of ONCE+ algorithm. We use another example to show how ONCE+ works. In Figure~\ref{fig:once+fig}, there is an OccMap that corresponds to $e^\tau=\langle A,A,B\rangle$.
\begin{equation*}
\begin{split}
S=\langle(A,1),(C,2),(A,3),(D,4),(A,5),(C,6),(A,7),\\
(C,8),(B,9),(B,10)\rangle
\end{split}
\end{equation*}
when $\tau=9$, we follow Algorithm~\ref{alg:valeli+}, $\langle 1,3,9\rangle$ and $\langle 5,7,10\rangle$ are distinct occurrences of $e^9=\langle A,A,B\rangle$, so $freq^+(e^9,S)=2$, if we set $\tau=7$, the distinct occurrences of $e^7=\langle A,A,B\rangle$ are then show (in green) in Figure~\ref{fig:once+fig2}, that is $freq^+(e^7,S)=1$.

However, if we use ONCE to count the non-overlapped frequency of $e^9=\langle A,A,B\rangle$ or $e^7=\langle A,A,B\rangle$, the only non-overlapped minimal occurrence is $\langle 5,7,9\rangle$, that is, $freq(e^7,S)=1$ and $freq(e^9,S)=1$.

The correctness of ONCE+ is easy to justify following the same way as Section~\ref{ssec:correct}.
\begin{figure}
\centering
  \epsfig{file=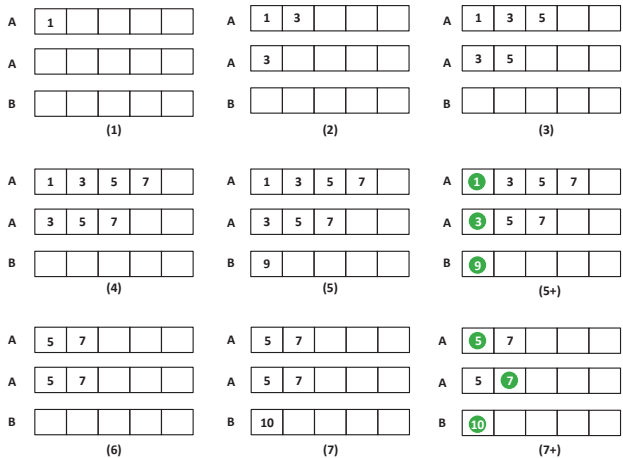, width=\columnwidth}

\vspace{0ex}\caption{Mining the non-overlapped frequency with intersected occurrence of $e^9=\langle A,A,B\rangle$ in the sequence S [best viewed in color].}
\vspace{0ex}\label{fig:once+fig}       
\end{figure}

\begin{figure}
\centering
  \epsfig{file=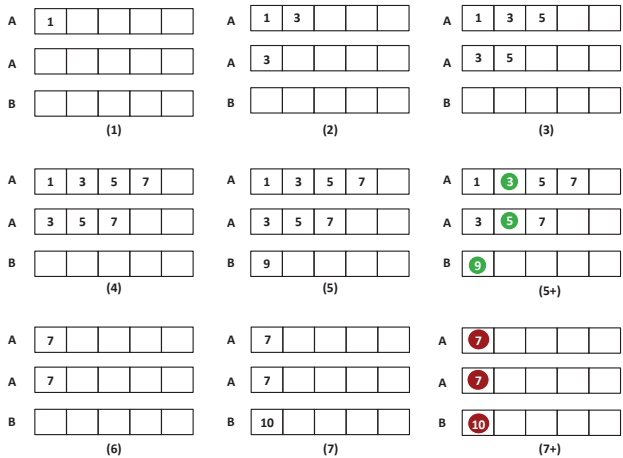, width=\columnwidth}

\vspace{0ex}\caption{Mining the non-overlapped frequency with intersected occurrence of $e^7=\langle A,A,B\rangle$ in the sequence S [best viewed in color].}
\vspace{0ex}\label{fig:once+fig2}       
\end{figure}

\vspace{0ex}\section{Experimental Study}\label{sec:exp}
In this part, we conduct experimental study over both synthetic and real world data. Through the experimental results, we justify that ONCE (\resp ONCE+) can answer the non-overlapped (\resp distinct) frequency counting problem for time-constrained serial episodes in a one-pass way efficiently. The real world data is the streaming sequence of telecommunication alarms within 4 cities in Guizhou Province of China during 2014. Besides, we also generate a synthetic dataset by randomly sample an event at each timestamp. The statistics of both datasets are shown in Table~\ref{tab:dataset}. The synthetic data are generated by uniformly randomly sampling a particular event from $\Sigma$ one step after another. All the experiments are tested on a workstation with Xeon E5-2603v3 1.6GHz CPU, 16GB RAM running Ubuntu 12.04 LTS. We compare ONCE algorithm and ONCE+ with other baselines, namely SASE+ and SASE++~\cite{DBLP:conf/sigmod/ZhangDI14}. All the parameters in SASE+ and SASE++ are optimized according to their suggested settings.

Notably, in the following experiments, we report the average throughput, which is defined as the number of signals processed by an algorithm per second. In line with~\cite{DBLP:conf/sigmod/ZhangDI14}, we report how the throughput can be affected by different factors. Through all these experiments, we are exciting to find that ONCE only takes less than $1\mu s$ for each $S[i]$, especially for the real world dataset. That is, our model can work in event-intensive stream even if millions of events arrive in single second\footnote{\scriptsize The code and dataset will be released once this work is published.}.
\begin{table}
\vspace{0ex}\caption{Dataset statistics}
\label{tab:dataset}       
\centering
\vspace{0ex}
\begin{tabular}{llll}
\hline\noalign{\smallskip}
datasets & $|S|$ & $\Sigma$ & duration  \\
\noalign{\smallskip}\hline\noalign{\smallskip}
Telecom. alarms & 8,821,220 & 252 & 2014-05-01 0h to 2014-05-31 24h \\
Synthetic & 91,021 & 622 & - \\
\noalign{\smallskip}\hline
\end{tabular}
\vspace{0ex}\end{table}

\vspace{0ex}\btitle{Selectivity $\theta$.} It is defined as, $\frac{\#Matches}{\#Events}$, which is controlled by changing the target episodes in stream $S$\footnote{Once a timestamp is inserted into OccMap, we identify it as a \emph{Match}.}. Similar to~\cite{DBLP:conf/sigmod/ZhangDI14}, it is varied from $10^{-6}$, up to 1.6, which is a very heavy workload to test our algorithms. We simulate the stream $S$ by sequentially input a new signal after some time interval. In particular, for the real-world data, each signal in the experiment arrives exactly the same with its original time interval; for synthetic dataset, we set each signal arrives with a constant speed every 1$ms$. Firstly we test the \emph{throughput} processing all signals in $S$, and report the average over all 10 episodes. Figure~\ref{fig:exp2} show the throughput of the real-world data and synthetic dataset while varying $\theta$. We see that the throughput of SASE+ drops very fast as $\theta$ increase, and that of SASE++ is worse than ONCE and ONCE+. The throughput of ONCE and ONCE+ is similar. SASE++, ONCE and ONCE+ are not sensitive to the selectivity. The throughput of ONCE and ONCE+ is nearly an order of magnitude better than SASE++.

\vspace{0ex}\btitle{Effect of $\tau$.} Secondly, we test the throughput of ONCE and ONCE+ by varying the time constraint for the given serial episodes. In particular, at each $\tau$ we randomly select 10 different episodes with length $5$. We test the throughput of Algorithm~\ref{alg:Framwork} processing all signals in $S$, and report the average over all 10 episodes. Notably, as in synthetic data each signal is associated with a discrete step, $\tau$ is defined as the maximal number of steps an episode should cover. In real-world dataset, we vary the time constraint from $0.5$ to $12$ hours. The results are shown in Figure~\ref{fig:exp1}. Notably, the response time for all the cases remains almost constant. The phenomenon seems different from our time complexity study. The reason is as follows. As $\tau$ is increased, the probability of performing Lines 7-9 in Algorithm~\ref{alg:valeli}, whose complexity is $O(1)$, will be mach larger than that of Lines 11-18, whose complexity is $O(k\log\tau)$. Therefore, as $\tau$ increases, the curve will tend to be more constant (as Lines 7-9 contributes more to the response time) than sublinear.

\begin{figure}[t]
\centering
\subfloat[][Synthetic]{\epsfxsize=0.48\columnwidth \epsffile{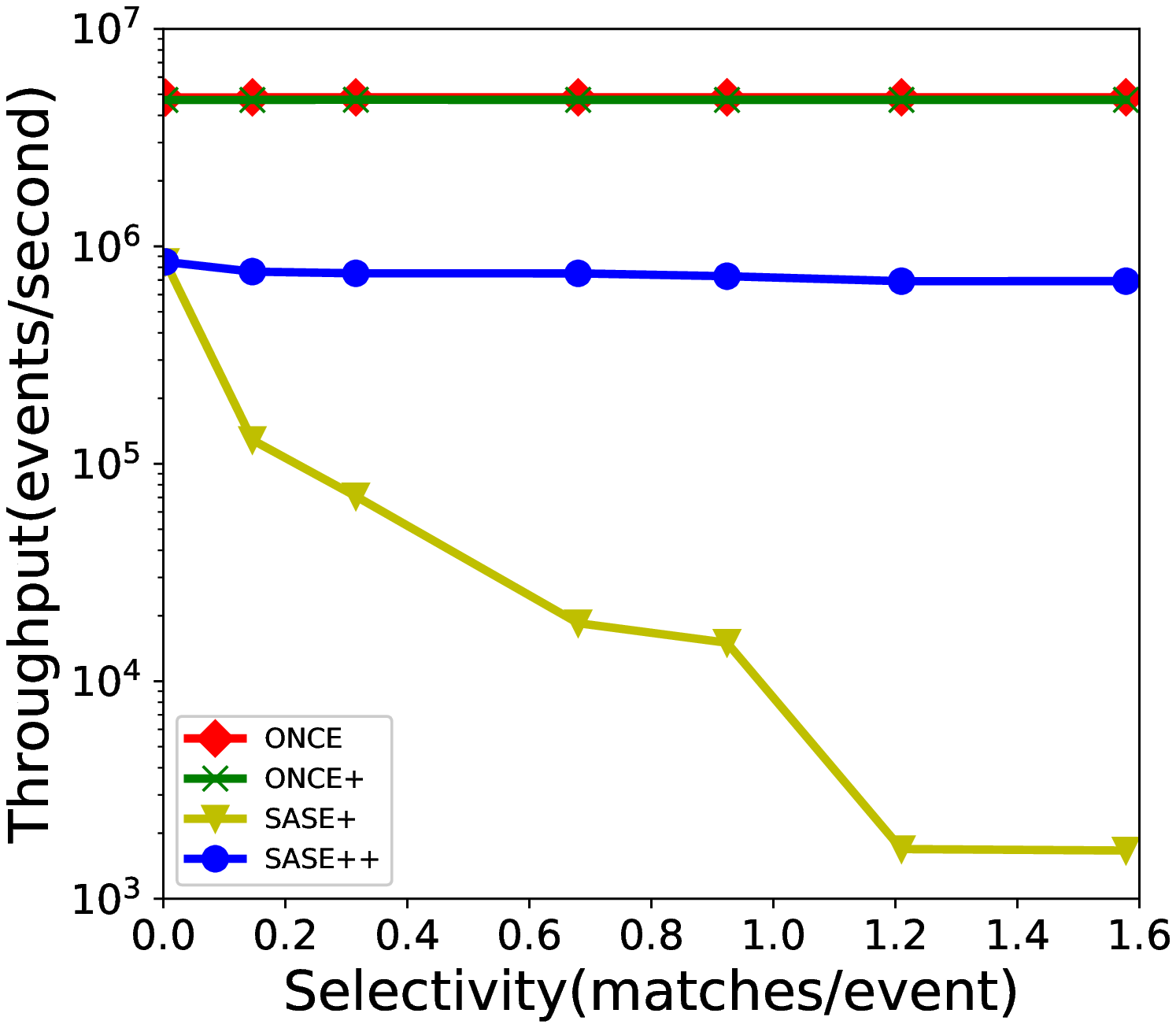}}
\subfloat[][Telecom. alarms]{\epsfxsize=0.48\columnwidth \epsffile{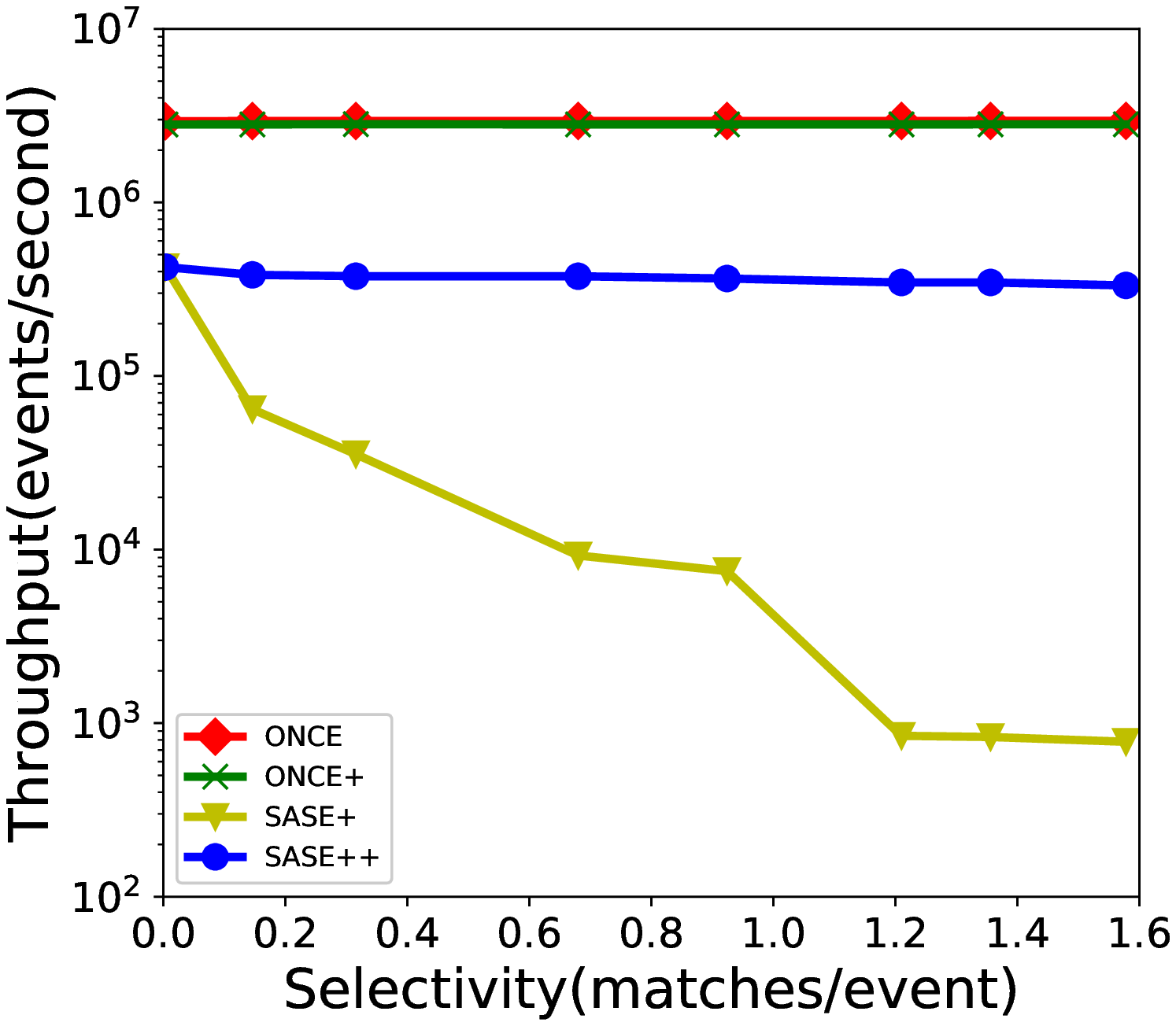}}
\vspace{0ex}\caption{The throughput by varying Selectivity (matches/event).}
\vspace{0ex}\label{fig:exp2}
\end{figure}

\begin{figure}[t]
\centering
\subfloat[][Synthetic]{\epsfxsize=0.48\columnwidth \epsffile{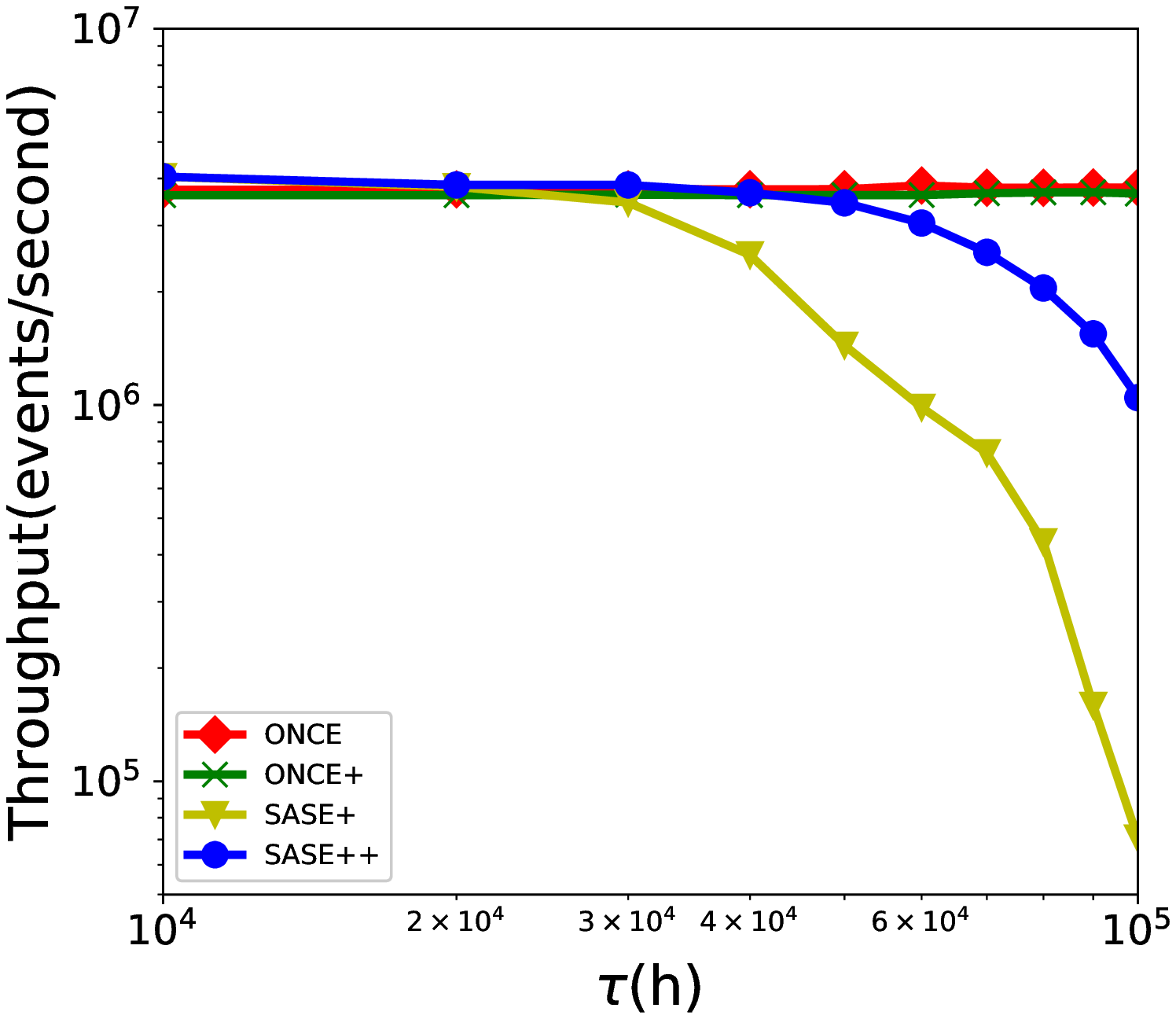}}
\subfloat[][Telecom. alarms]{\epsfxsize=0.48\columnwidth \epsffile{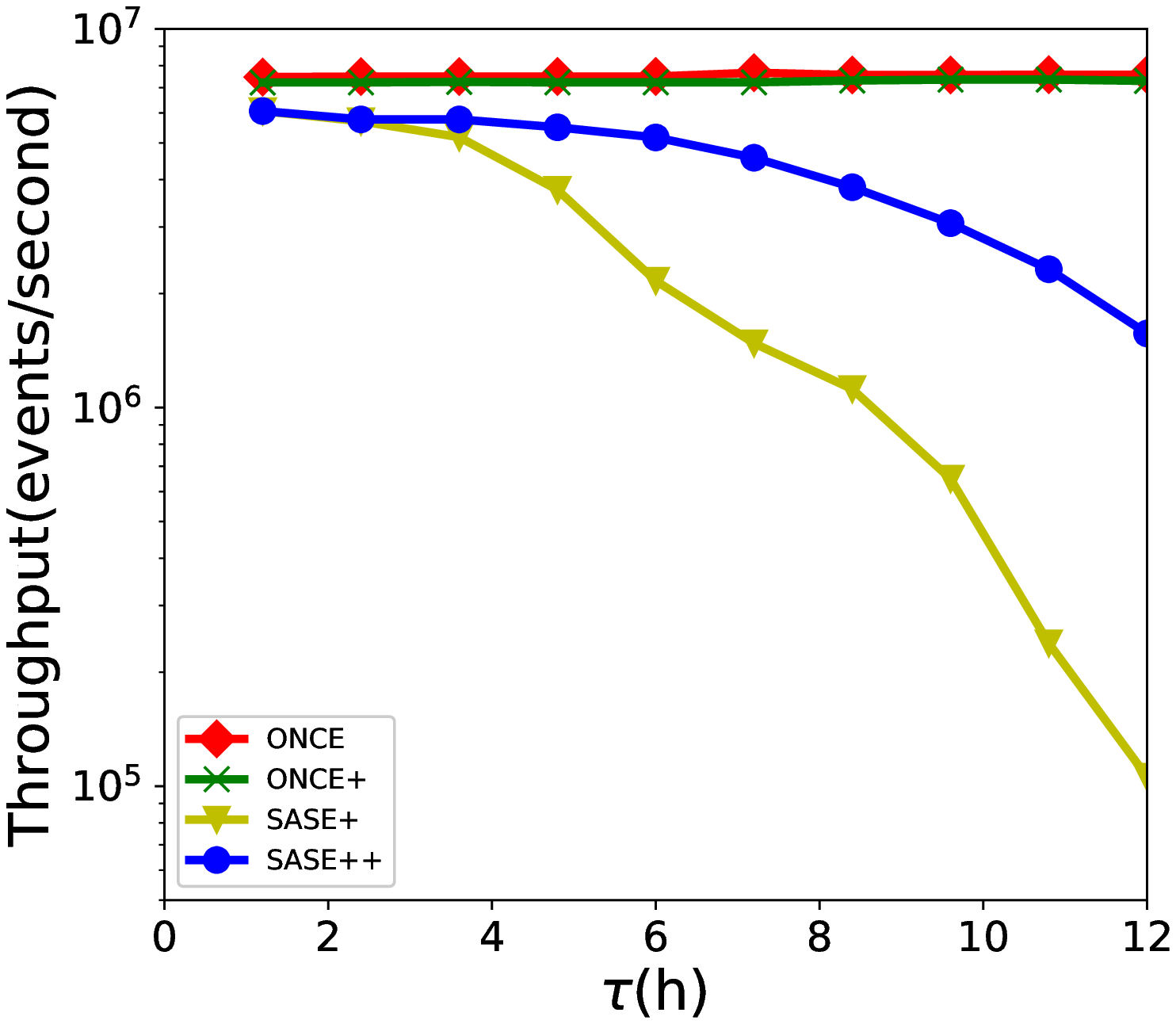}}
\vspace{0ex}\caption{The throughput by varying $\tau$.}\label{fig:exp1}
\vspace{0ex}\end{figure}

\vspace{0ex}\btitle{Implicit factors.} During the experimental study, we find that besides $\tau$ and $k$, the response time also vary for different $e^\tau$ even if they share the same length $k$ and $\tau$. The reason is that, according to Algorithm~\ref{alg:Framwork}, each time a new event $S[i]$ arrives, Lines 4-9 in Algorithm~\ref{alg:Framwork}, which is the most time-consuming, may not always be performed. Intuitively, each time $freq(e^\tau,S)$ is updated, this part is performed. Therefore, the frequency of $e^\tau$ implicitly affects the eventual response time. Hence, we conduct another experiment to test the effect of frequency by fixing both $k$ and $\tau$ at particular levels. The results are shown in Figure~\ref{fig:exp0}. We randomly select 10 episodes at each frequency level (\ie $500, 1000, 1500, 2000$) and report the average time for processing $S[i]$. We repeat the same setting for episodes with lengths $3,5,7$, respectively. As the maximum frequency for episodes with length $7$ is less than 1500, thus it does not appear when frequency is 1500 and 2000. Notably, the frequencies of episodes selected at each level (\eg $500$) may vary a bit (\eg $495, 502$ and etc.). Figure~\ref{fig:exp0} only reports that of the synthetic data, as we cannot find enough episodes at each frequency level in real world one. Obviously, the response time increases along with the frequency level, which agrees with our analysis above.

\begin{figure}[t]
\centering
\subfloat[][]{\epsfxsize=0.48\columnwidth \epsffile{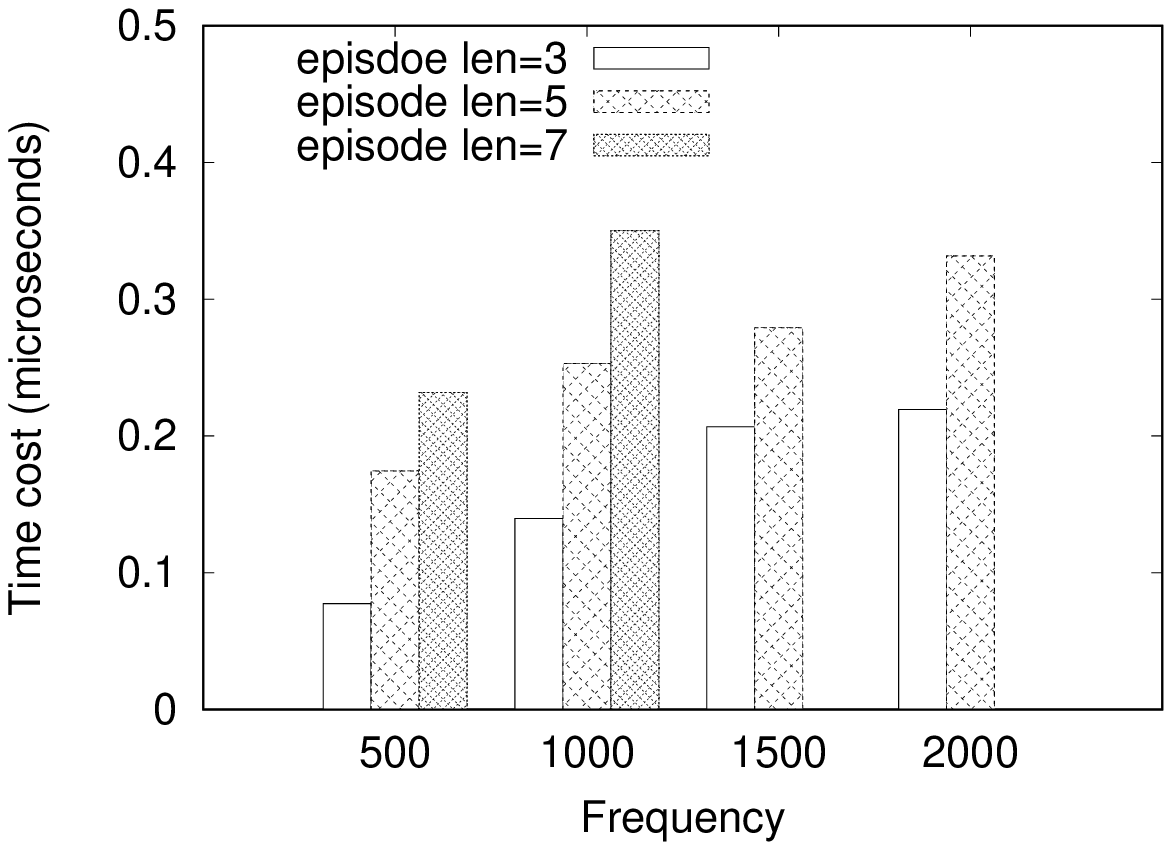}}
\subfloat[][]{\epsfxsize=0.48\columnwidth \epsffile{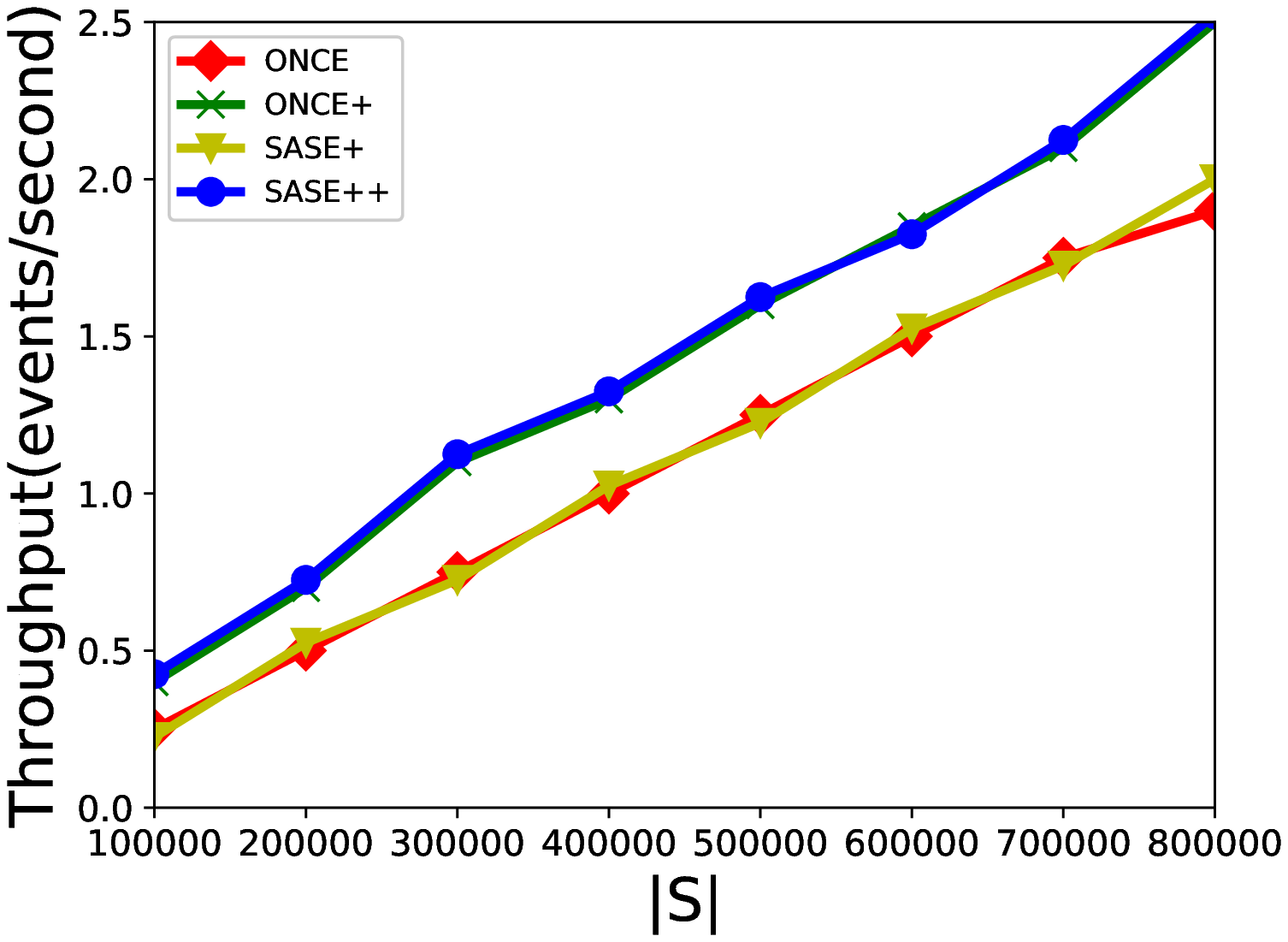}}
\vspace{0ex}\caption{(a) Effect of frequency; (b) Scalability.}\label{fig:exp0}
\vspace{0ex}\end{figure}

\eat{
\btitle{Comparison with baseline.} Finally, we compare ONCE, ONCE+ and one state-of-the-art time-constrained serial episode mining methods, namely DFS~\cite{DBLP:journals/dke/AcharAS13}. Methods such as~\cite{DBLP:conf/kdd/LaxmanSU07} that works in stream has already been justified inaccurate to address time-constrained problem in Section~\ref{ssec:prodef}, thus we do not take it into comparison. Notably,~\cite{DBLP:journals/dke/AcharAS13} is a multi-pass algorithm and cannot work in stream. For fairness, instead of reporting the response time for processing a single event $S[i]$ in a streaming, we report the total running time for reading all signals in a static sequence $S$ with $\tau$=3 hours. The results on running time are shown in Figure~\ref{fig:exp0}(b). Obviously, DFS takes more than 20 times the cost comparing to ONCE. Moreover, the running time of both DFS and ONCE depends on the length of $e^\tau$. Therefore, even if work in static mode, ONCE can also significantly outperform the state-of-the-art multi-pass methods. Besides, we also compare the result accuracy in the experiment. The output frequency for the given set of time-constrained episodes during the experiments are identical between both methods. That is, both DFS and ONCE can correctly output the frequency of time-constrained serial episode.
}

\vspace{0ex}\btitle{Scalability and memory consumption.} We now test the scalability of the model by varying the length of input sequence $S$ from $100,000$ to $700,000$. The response time are shown in Figure~\ref{fig:exp0}(b). It increases almost linearly, which is consistent with our analysis in the end of Section~\ref{ssec:once}. Notably, the average response time for processing a single signal is less than $1\mu s$. That is, ONCE and ONCE+ can work on signal-intensive streams where millions of events happen in a second. We also demonstrate how the memory usage of the core structure $OM(e^\tau)$ in ONCE algorithm scales with the size of target episodes. As described in $ListUpdate$ operation, each $OM(e^\tau)$ is initialized as $k$ layered empty lists. As massive data flow in, corresponding signal will be stored in this structure. To evaluate the memory consumption of the proposed algorithm during this dynamic process, the maximum cost (the structure reaches its largest condition at occurrence validation step) is measured under the condition that $k$ is steadily increased from 3 to 11. Notably, at each particular length level (3, 5, 7, 9, 11), we randomly select 10 target episodes whose occurrences are count and corresponding memory consumptions are evaluated. As is evident in Table~\ref{tab:mem}, the memory consumption grows with $k$ because the number of lists in $OM(e^\tau)$ increases with the length of the target episode. Therefore, with more layered lists to generate candidate occurrence, $OM(e^\tau)$ shall store more signals before occurrence validation. Notably, SASE++ exhibits the same memory consumption with SASE+, and the memory consumption of ONCE+ is the same to ONCE, so we only list the comparative results between ONCE and SASE+.

\begin{table}[htbp]
  \centering

  \caption{Memory consumption by varying $k$ (in KB)}\label{tab:mem}

    \begin{tabular}{r|r|r|r|r}
    \toprule
          & \multicolumn{2}{c|}{Synthetic} & \multicolumn{2}{c}{Telecom. Alarms} \\
\cmidrule{2-5}    \textit{k} & \multicolumn{1}{c|}{ONCE} & \multicolumn{1}{c|}{SASE+} & \multicolumn{1}{c|}{ONCE} & \multicolumn{1}{c}{SASE+} \\
    \midrule
    3     & 0.1089 & 1.1194 & 0.1338 & 1.2403 \\
    5     & 0.4393 & 2.282 & 0.8174 & 4.3964 \\
    7     & 0.7067 & 4.2985 & 1.1613 & 10.1032 \\
    9     & 1.7303 & 14.3811 & 7.4654 & 66.4988 \\
    11    & 2.0011 & 20.198 & 14.3659 & 162.8722 \\
    \bottomrule
    \end{tabular}%
  \label{tab:addlabel}%
\end{table}%

\vspace{0ex}\section{Application and Observations}\label{sec:obs}
As the direct implementation of the algorithm, it has been applied in practice within a corresponding telecom alarm management system within China. The system is built to monitor all the alarms sent from network equipments of a particular service provider within Guizhou province in real time. It is expected that through the system, operators and officers can respond to system faults and emergency as soon as possible.

For instance, the following is a standard rule that is listed in the Service Manual of the system operator, ``If more than 10\% of the cells in a district are out-of-service, local officer have to be dispatched to handle it''. There exist hundreds of other similar rules within the manual. Applying these rules in practice requires counting the specific incidents in real-time while the alarms keep on arriving in streaming manner. Unfortunately, this is not a trivial task, as the majority of the incidents appeared in the listed rules cannot be simply interpreted as a single alarm. Instead, they can only be interpreted as a sequential combination of series of particular alarms. For example, according to the empirical study of the company, a cell is ``out-of-service'' only if there are three alarms, namely `\emph{low voltage}', `\emph{base station disconnect}' and `\emph{carrier wave alarm}', appearing sequentially within 3 minutes. Therefore, in order to respond the the incidents effectively and efficiently, the system has to be able to count the frequency of specific serial combination of alarms in real-time and respond as soon as possible whenever the frequency is beyond a predefined threshold. Table~\ref{tab:rules} lists the incidents, their corresponding sequential alarms as well as the respond thresholds.

To address the problem, we are invited to apply ONCE and ONCE+ algorithms in this system over the streaming alarms received in real-time. Within this practical applications, there are over 250,000 alarms received everyday, that is, more than 3 alarms every second. In another word, the system has to observe an arbitrary incident listed in the manual within $1/3$ second, otherwise the system will fail to respond properly. ONCE and ONCE+ algorithm deployed in this system can successfully output the frequency of required incidents (\resp time-constrained serial alarms) within a millisecond, which has also been demonstrated in our experimental study. Notably, most of the serial alarms (shown in Table~\ref{tab:rules}) are restricted to happen in the same district (\resp NE, base station), which, in fact, makes each individual alarm signal a two-dimensional sample such that ONCE and ONCE+ cannot be directly applied. To address that, the same alarm happening in different places (\ie NE, base station) are treated as different symbols in $\Sigma$, that is, they are viewed as completely different symbols in ONCE and ONCE+. In this way, the frequency of all the required incidents can be counted in the system.

\begin{table*}[htbp]
  \centering
  \caption{Incidents and the interpreted serial alarms in the Service Manual (part).}
    \begin{tabular}{|l|p{7cm}|l|p{3cm}|}
    \toprule
    \textbf{Incidents} & \textbf{Corresponding serial alarms ($e$)} & \textbf{Time thresholds ($\tau$)} & \textbf{Frequency thresholds} \\
    \midrule
    cell out of service & low voltage, base station disconnect, carrier wave alarm & 3 minutes & 10\% of all cells in a district \\
    \hline
    AP out of service & ap reboot, ap fault & 5 minutes & 5\% of all cells in a district \\
    \hline
    APG process failure 1 & APG process reinitiated, statistics and traffic measurement colleciton timeout fault, CPT fault & 5 minutes & 5\% of all cells in a district \\
    \hline
    APG process failure 2 & APG process reinitiated, CPT fault, statistics and traffic measurement colleciton timeout fault & 5 minutes & 5\% of all cells in a district \\
    \hline
    NE out of service & SNT fault, MSC fault, MSC fault & 5 minutes & 10 in the same NE \\
    \hline
    base station out of service & carrier wave alarm, DTS fault, DS fault & 15 minutes & 5 in the same base station \\
    \hline
    NE communication blocked & Digital path unavailible state fault, Digital path fault supervision & 5 minutes & 5 in the same NE \\
    \hline
    NE synchronous failure & Synchronous digital path fault supervision, Synchronous fault & 5 minutes & 5 in the same NE \\
    \hline
    gateway articulate failure 1 & GARP fault, MGW fault, gateway channel fault, gateway block alarm & 5 minutes & 5\% of all gateways in the same district \\
    \hline
    gateway articulate failure 2 & SER reinitiated, MGW fault, gateway channel fault, gateway block alarm & 5 minutes & 5\% of all gateways in the same district \\
    \hline
    SAE monitor failure & NE\_e alarm, audit monitor alarm, NE threshold alarm & 5 minutes & 10 in the same NE \\
    \hline
    NE signal failure & Signalling link alarm & 3 minutes  & 10 in the same NE \\
    \hline
    Billing failure & Billing equipment fault, abnormal storage of billing data, billing file fault & 3 minutes & 10 in a district\\
    \hline
    Fiber link failure &  IP signal alarm, optical fiber fault & 3 minutes & 5 in a district\\
    \hline
    AC failure & DHCP resource alarm, AC power alarm, Portal service fault, Radius billing server fault, Radius authentication fault & 5 minutes & 5 in a district \\
    \bottomrule
    \end{tabular}%
  \label{tab:rules}%
\end{table*}%
\eat{
\begin{figure}[t]
\centering
\subfloat[][Synthetic]{\epsfxsize=0.4\columnwidth \epsffile{figure/mem_real.eps}}
\subfloat[][Telecom. alarms]{\epsfxsize=0.4\columnwidth \epsffile{figure/mem_syn.eps}}
\vspace{-2ex}\caption{Memory Consumption by varying $k$}\label{fig:exp333}
\vspace{-2ex}\end{figure}
}

%
\vspace{0ex}\section{Conclusion}\label{sec:con}
In this work, we present ONCE and ONCE+ algorithms, which can answer non-overlapped and distinct frequency counting problem respectively, for given time-constrained serial episodes within a given streaming sequence. ONCE and ONCE+ algorithms work in a one-pass way with the help of a carefully designed data structure, OccMap. For each single event arrived, ONCE and ONCE+ only takes $O(k\log\tau)$ time to process it. In fact, the problem we are addressing in this work can degenerate to the traditional serial episode frequency mining problem if time constraint for the target episode is set to infinite. Moreover, we theoretical prove that ONCE (\resp ONCE+) can correctly answer the non-overlapped (\resp distinct) frequency counting problem.   Experimental study conducted over both synthetic and real world datasets justify that ONCE and ONCE+ can efficiently work on stream data, even if millions of events arrive in a single second.

In some complex applications, there exist more complex serial episodes that consist of multi-dimensional signals. Extending ONCE and ONCE+ to address the same problem in high-dimensional streams is part of our future work.

\appendices


\vspace{0ex}\section{Proof of Lemma~\ref{lemma1}}\label{app1}
Suppose $freq'(e^\tau,S)> freq(e^\tau,S)$, that is, either one of the following cases happens.
\begin{itemize}
  \item ONCE finds a minimum occurrence $Occ_{OPT}(e^\tau,S)$ which does not pass time constraint test, but we incorrectly increase $freq'(e^\tau,S)$ by $1$.
  \item ONCE find two minimum occurrences $Occ_{OPT1}(e^\tau,S)$,\linebreak $Occ_{OPT2}(e^\tau,S)$ that overlap with each other and both pass time constraint test, we count the frequency of $e^\tau$ by $2$.
\end{itemize}

According to Algorithm~\ref{alg:Framwork} and~\ref{alg:valeli}, only when $Occ_{OPT}(e^\tau,S)$ passes the time constraint test, we increase its frequency. Therefore, the first case cannot happen in ONCE.

Then we prove that the second case contradicts the conditions in ONCE. Without loss of generality, suppose $Occ_{OPT1}(e^\tau,S)$ consists of $t_1,\ldots,t_k$ and $Occ_{OPT2}(e^\tau,S)$ consists of $t_1^\prime,\ldots,t_k^\prime$, respectively, and $t_k\le t_k^\prime$. If $t_1\ge t_1^\prime$, then according to Definition~\ref{df:minocc}, $Occ_{OPT2}(e^\tau,S)$ cannot be a minimum occurrence. Hence, $t_1<t_1^\prime$. As there is overlap between $Occ_{OPT1}(e^\tau,S)$ and $Occ_{OPT2}(e^\tau,S)$, then $\exists i\in[2,k]$. Thus, $t_i\ge t_1^\prime$.

According to Algorithm~\ref{alg:valeli}, when $Occ_{OPT1}(e^\tau,S)$ is found by ONCE and passes time constraint test, all the entries in $OM(e^\tau)$ are eliminated, including $t_1^\prime$. After that, any other entries $t$ inserted satisfies $t>t_k$.

Obviously, as long as $Occ_{OPT1}(e^\tau,S)$ are found and tested, $Occ_{OPT2}(e^\tau,S)$ can never be found by ONCE as $t_1^\prime$ has already been eliminated. Therefore, the second case can never happen in ONCE.

In all, all the cases that lead to $freq'(e^\tau,S)> freq(e^\tau,S)$ can never happen in ONCE, that is $freq'(e^\tau,S)\le freq(e^\tau,S)$.

\vspace{-1ex}\section{Proof of Lemma~\ref{lemma2}}\label{app2}
Suppose $freq'(e^\tau,S)< freq(e^\tau,S)$, that is, there exists a minimum occurrence $Occ_{OPT}(e^\tau,S)$ (\ie $[t_1,\ldots,t_k]$) that satisfies time constraint ($t_k-t_1\le\tau$) and does not overlap with any other minimum occurrences, which successfully increase the frequency, but ONCE fails to find it. In the following, we show that the premise cannot happen in ONCE through two steps. Firstly, we prove that when $t_k$ is appended into $L_k$, $\forall t<k, t_i\in OM(e^\tau)$. Secondly, we prove that as long as $t_1,\ldots,t_k$ are present in $OM(e^\tau)$, ONCE can never miss them.

\vspace{0ex}\btitle{Step 1.} Without loss of generality, suppose $t_i\not\in OM(e^\tau)$ and $t_j\in OM(e^\tau)$ for all $j<i$ ($t_i$ is the first one that is not present in $OM(e^\tau)$).

According to Algorithm~\ref{alg:valeli}, $t_i$ may either be eliminated from $OM(e^\tau)$ or not inserted in $OM(e^\tau)$.

\noindent \textbf{1)} [\emph{if eliminated}] According to Algorithm~\ref{alg:Framwork},~\ref{alg:update} and~\ref{alg:valeli}, the elimination only happen in Line 7,12,15 of Algorithm~\ref{alg:valeli} and Line 5 of Algorithm~\ref{alg:update}, we discuss each of the case in sequence.
\begin{itemize}
\item If it is eliminated from $OM(e^\tau)$ in Line 7 of Algorithm~\ref{alg:valeli}, then $[t_1,\ldots,t_k]$ overlaps with another minimum occurrences that successfully increases the frequency, which contradicts with the premise.

\item If it is eliminated from $OM(e^\tau)$ in Line 12 of Algorithm~\ref{alg:valeli}, all $t_j$ ($j<i$) will also be eliminated, this contradicts with the fact that $t_j$ are present in $OM(e^\tau)$; otherwise $i=1$, as it has been eliminated in Line 12, then $t_k-t_i = t_k-t_1 > L_k[1]-t_1 > \tau$, which contradicts with the fact that $t_k-t_1\le\tau$.

\item If it is eliminated from $OM(e^\tau)$ in Line 15 of Algorithm~\ref{alg:valeli}, then according to Line 14, we can derive $i\ge 2$. All $t_j$ ($j<i$) will also be eliminated as $t_i\le L_{i-1}[1]\le\ldots\le L_1[1]$, which contradicts with the fact that $t_j$ are present in $OM(e^\tau)$.

\item If it is eliminated from $OM(e^\tau)$ in Line 5 of Algorithm~\ref{alg:update}, it is easy to find $t_k-t_i>\tau$, therefore $t_k-t_1>\tau$, which contradicts with the fact that $t_k-t_1\le\tau$.

\end{itemize}

Therefore, $t_i$ can never be eliminated from $OM(e^\tau)$. Then it should never be inserted in to $OM(e^\tau)$.

\noindent \textbf{2)} [\emph{if not inserted}] If $t_i$ is never inserted into $OM(e^\tau)$, \ie $t_i$ is never appended to $L_i$, then $i>1$ and $L_{i-1}$ should be empty when the event $S[j]$ ($S[j].t=t_i$) arrives according to Algorithm~\ref{alg:update}. As $L_{i-1}$ is empty, $t_{i-1}$ should not be present in $OM(e^\tau,S)$, which contradicts with the premise that $t_i$ is the first one that is not present in $OM(e^\tau)$.

Taking both 1) and 2) together, $\forall i\in[1,k]$, $t_i$ should be present in $OM(e^\tau,S)$ when $t_k$ is appended to $L_k$.

\vspace{0ex}\btitle{Step 2.} As all $t_1,\ldots,t_k$ are present in $OM(e^\tau,S)$ when $t_k$ is appended to $L_k$, we further show that ONCE can never miss them.

Suppose ONCE algorithm finds a minimum occurrence $[t_1^\prime,\ldots,t_{k-1}^\prime,t_k]$ when $t_k$ is appended. If $t_1^\prime>t_1$, $[t_1,\ldots,t_k]$ is not a minimum occurrence, which contradicts with the premise of the lemma.

Otherwise, $t_1^\prime<t_1$, then $[t_1^\prime,\ldots,t_{k-1}^\prime,t_k]$ cannot be a minimum occurrence. Therefore, $t_1^\prime=t_1$. As $t_k-t_1^\prime = t_k-t_1 \le \tau$, ONCE will definitely increase $freq'(e^\tau,S)$ by $1$ and remove all entries in $OM(e^\tau)$.

Taking both step 1 and 2 together, $freq'(e^\tau,S)< freq(e^\tau,S)$ can never happens, thus $freq'(e^\tau,S)\ge freq(e^\tau,S)$.
\section*{Acknowledgment}

This work is supported by National Nature Science Foundation of China (No. 61672408, 61472298), CCF-VenustechRP (No. 2017005), Huawei Innovative Research Program (No. HIRPO20160606), China 111 Project (No. B16037).


\end{document}